\tikzset{faded/.style={gray,very thin}}
\tikzset{vertex/.style={draw,circle,minimum size=10pt,inner sep=0pt}}
\tikzset{novertex/.style={circle,minimum size=10pt,inner sep=0pt}}
\tikzset{blackvertex/.style={draw,circle,minimum size=10pt,inner sep=0pt, fill=black}}
\tikzset{redvertex/.style={draw,circle,minimum size=10pt,inner sep=0pt, fill=red}}
\tikzset{redvertexfaded/.style={draw,circle,faded,minimum size=10pt,inner sep=0pt, fill=red!50}}
\tikzset{greenvertex/.style={draw,circle,minimum size=10pt,inner sep=0pt, fill=green}}
\tikzset{greenvertexfaded/.style={draw,circle,faded,minimum size=10pt,inner sep=0pt, fill=green!50}}
\tikzset{bluevertex/.style={draw,circle,minimum size=10pt,inner sep=0pt, fill=blue}}
\tikzset{bluevertexfaded/.style={draw,circle,faded,minimum size=10pt,inner sep=0pt, fill=blue!50}}
\tikzset{yellowvertex/.style={draw,circle,minimum size=10pt,inner sep=0pt, fill=yellow}}
\tikzset{yellowvertexfaded/.style={draw,circle,faded,minimum size=10pt,inner sep=0pt, fill=yellow!50}}
\tikzset{edge/.style = {->,> = latex'}}
\tikzset{wavy/.style={decorate, decoration=snake}}
\newcolumntype{\expand}{}
\long\@namedef{NC@rewrite@\string\expand}{\expandafter\NC@find}
  \def\boxproblem@arg{#1}%
  \def\boxproblem@framed{framed}%
  \def\boxproblem@lined{lined}%
  \def\boxproblem@doublelined{doublelined}%
    \def\boxproblem@hline{}%
      \def\boxproblem@hline{\hline\hline}%
      \def\boxproblem@hline{\hline}%
    \def\boxproblem@tablelayout{|>{\bfseries}lX|c}%
    \def\boxproblem@title{\multicolumn{2}{|l|}{%
        \raisebox{-\fboxsep}{\textsc{\normalsize #2}}%
      }}%
    \def\boxproblem@tablelayout{>{\bfseries}lXc}%
    \def\boxproblem@title{\multicolumn{2}{l}{%
        \raisebox{-\fboxsep}{\textsc{\normalsize #2}}%
      }}%
  \renewcommand{\arraystretch}{1.2}%
\DeclareMathOperator{\width}{\text{width}}
\newcommand{\dtw}{\textsf{dtw}}
\newcommand{\tbd}{\textsc{Disjoint Enough Directed Paths}\xspace}
\newcommand{\tbdp}{\textsc{DEDP}\xspace}
\newcommand{\BcolName}{blocking collection\xspace}
\theoremstyle{plain}
\newtheorem{mycondition}[theorem]{Condition}
\newcommand{\lip}[1]{\textcolor{darkgray}{\sffamily\bfseries #1}}
\newcommand{\Ocal}{\mathcal{O}}
\title{A relaxation of the Directed Disjoint Paths problem: a global congestion metric helps}
\titlerunning{A relaxation of the Directed Disjoint Paths problem: a global congestion metric helps}%optional, please use if title is longer than one line
\author{Raul Lopes}{Departamento de Computa\c c\~ ao, Universidade Federal do Ceará, Fortaleza, Brazil, and \and  LIRMM, Université de Montpellier,  Montpellier, France}{raul@alu.ufc.br}{https://orcid.org/0000-0002-7487-3475
}{Coordenação de Aperfeiçoamento de Pessoal de Nível Superior (CAPES).}%TODO mandatory, please use full name; only 1 author per \author macro; first two parameters are mandatory, other parameters can be empty. Please provide at least the name of the affiliation and the country. The full address is optional
\author{Ignasi Sau}{LIRMM, Universit\'e de Montpellier, CNRS, Montpellier, France}{ignasi.sau@lirmm.fr}{https://orcid.org/0000-0002-8981-9287}{Projects DEMOGRAPH (ANR-16-CE40-0028), ESIGMA (ANR-17-CE23-0010), ELIT (ANR-20-CE48-0008-01) and UTMA (ANR-20-CE92-0027).}
\authorrunning{Raul Lopes and Ignasi Sau}%TODO mandatory. First: Use abbreviated first/middle names. Second (only in severe cases): Use first author plus 'et al.'
\keywords{Parameterized complexity, directed disjoint paths, congestion, dual parameterization, kernelization, directed tree-width.}%TODO mandatory; please add comma-separated list of keywords
\begin{document}

\maketitle

%TODO mandatory: add short abstract of the document
\begin{abstract}
In the \textsc{Directed Disjoint Paths} problem, we are given a digraph $D$ and a set of requests $\{(s_1, t_1), \ldots, (s_k, t_k)\}$, and the task is to find a collection of pairwise vertex-disjoint paths $\{P_1, \ldots, P_k\}$ such that each $P_i$ is a path from $s_i$ to $t_i$ in $D$. This problem is \textsf{NP}-complete for fixed $k=2$ and \textsf{W}[1]-hard with parameter $k$ in DAGs. A few positive results are known under restrictions on the input digraph, such as being planar or having bounded directed tree-width, or under relaxations of the problem, such as allowing for vertex congestion. 
Positive results are scarce, however, for general digraphs. In this article we propose a novel global congestion metric for the problem: we only require the paths to be ``disjoint enough'', in the sense that they must behave properly not in the whole graph, but in an unspecified part of size prescribed by a parameter. Namely, in the \tbd  problem, given an $n$-vertex digraph $D$, a set of $k$ requests, and non-negative integers $d$ and $s$,  the task is to find a collection of paths connecting the requests such that at least $d$ vertices of $D$ occur in at most $s$ paths of the collection.
We study the parameterized complexity of this problem for a number of choices of the parameter, including the directed tree-width of $D$. Among other results, we show that the problem is \textsf{W}[1]-hard in DAGs with parameter $d$ and, on the positive side,  we give an algorithm in time $\Ocal(n^{d+2} \cdot k^{d\cdot s})$ and a kernel of size $d \cdot 2^{k-s}\cdot \binom{k}{s} + 2k$ in general digraphs.
This latter result has consequences for the \textsc{Steiner Network} problem: we show that it is \textsf{FPT} parameterized by the number $k$ of terminals and $p$, where $p = n - q$ and $q$ is the size of the solution.

\end{abstract}

%%% This is to be removed in the arXiv version %%%
%\newpage
%\setcounter{page}{1}

\section{Introduction}

In the \textsc{Disjoint Paths} problem, we are given a graph $G$ and a set of pairs of vertices $\{(s_1, t_1), \ldots, (s_k, t_k)\}$, the \emph{requests}, and the task is to find a collection of pairwise vertex-disjoint paths $\{P_1, \ldots, P_k\}$ such that each $P_i$ is a path from $s_i$ to $t_i$ in $G$.
Since this problem is \textsf{NP}-complete in the directed and undirected cases, even if the input graph is planar~\cite{FORTUNE1980111,Lynch:1975:ETP:1061425.1061430}, algorithmic approaches usually involve approximations, parameterizations, and relaxations.
In this article, we focus on the latter two approaches and the directed case.

\medskip
\noindent \textbf{Previous work.}  For the undirected case, Robertson and Seymour~\cite{ROBERTSON199565} showed, in their seminal work on graph minors, that  \textsc{Disjoint Paths} can be solved in time $f(k) \cdot n^{\Ocal(1)}$ for some computable function $f$, where $n$ is the number of vertices of $G$; that is, the problem is \emph{fixed-parameter tractable} (\textsf{FPT}) when parameterized by the number of requests.

The directed case, henceforth referred to as the \textsc{Directed Disjoint Paths}  (\textsc{DDP}) problem, turns out to be significantly harder:  Fortune et al.~\cite{FORTUNE1980111} showed that the problem is \textsf{NP}-complete even for fixed $k=2$. In order to obtain positive results, a common approach has been to consider restricted input digraphs. For instance, it is also shown in~\cite{FORTUNE1980111} that \textsc{DDP} is solvable in time $n^{\Ocal(k)}$ if the input digraph is acyclic.
In other words, \textsc{DDP} is \textsf{XP} in DAGs with parameter $k$.
For some time the question of whether this could be improved to an \textsf{FPT} algorithm remained open, but a negative answer was given by Slivkins~\cite{Slivkins.03}: \textsc{DDP} is \textsf{W}[1]-hard in DAGs with parameter $k$.  Johnson et al. introduced in~\cite{Johnson.Robertson.Seymour.Thomas.01} the notion of directed tree-width, as a measure of the distance of a digraph to being a DAG, and provided generic conditions that, if satisfied by a given problem, yield an \textsf{XP} algorithm on graphs of bounded directed tree-width. In particular, they gave an $n^{\Ocal(k+w)}$ algorithm for \textsc{DDP} on digraphs with directed tree-width at most $w$.  Another restriction considered in the literature is to ask for the underlying graph of the input digraph to be planar. Under this restriction, Schrijver~\cite{Sch94} provided an \textsf{XP} algorithm for \textsc{DDP} with parameter $k$, which was improved a long time afterwards to an \textsf{FPT} algorithm by Cygan et al.~\cite{6686155}.

A natural relaxation  for the \textsc{Directed Disjoint Paths} problem is to allow for vertex and/or edge congestion. Namely, in the \textsc{Directed Disjoint Paths with Congestion} problem (\textsc{DDPC} for short, or \textsc{DDPC-$c$} if we want to specify the value of the congestion), the task is to find a collection of paths satisfying the $k$ requests such that no vertex in the graph occurs in more than $c$ paths of the collection.
Amiri et al.~\cite{Amiri2016RoutingWC} considered the tractability of this problem when restricted to DAGs. %\ra{R1 (1): Rephrased the remaining of this paragraph.}
By a simple local reduction to the disjoint version, they showed how to apply the algorithm by Fortune et al.~\cite{FORTUNE1980111} to solve \textsc{DDPC} in time $n^{\Ocal(k)}$, and proved that, for every fixed $c \geq 1$, not only \textsc{DDPC} is \textsf{W}[1]-hard with relation to the parameter $k$, but also that the exponent $\Ocal(k)$ of $n$ is the best possible under the \emph{Exponential Time Hypothesis}.
%By a simple local reduction to the general version and the hardness result by Slivkins~\cite{Slivkins.03}, they showed that \textsc{DDPC-$c$} in DAGs is \textsf{W}[1]-hard for every fixed $c \geq 1$.
Together with the result by Johnson et al.~\cite{Johnson.Robertson.Seymour.Thomas.01}, this simple reduction presented in~\cite{Amiri2016RoutingWC} is sufficient to show that \textsc{DDPC-$c$} admits an \textsf{XP} algorithm with parameters $k$ and $w$ for every fixed $1 \leq c\leq k-1$ in digraphs with directed tree-width at most $w$, and the same result also holds when we allow for congestion on the edges.
In the main algorithmic result of the article, Amiri et al.~\cite{Amiri2016RoutingWC} proved that \textsc{DDPC-$c$} admits an \textsf{XP} algorithm with parameter $d$ in DAGs, where $d = k - c$.

Motivated by Thomassen's proof~\cite{Thomassen91} that \textsc{DDP} remains \textsf{NP}-complete for $k=2$ when restricted to $\beta$-strongly connected digraphs, for any integer $\beta \geq 1$, Edwards et al.~\cite{DBLP:conf/esa/EdwardsMW17} recently considered the \textsc{DDPC-$2$} problem (this version of the problem is usually called \emph{half-integral} in the literature) and proved, among other results, that it can be solved in time $n^{f(k)}$ when restricted to $(36k^3 + 2k)$-strongly connected digraphs.

Kawarabayashi et al.~\cite{KawarabayashiKK14} considered the following asymmetric version of the \textsc{DDPC-$4$} problem: the task is to either find a set of paths satisfying the requests with congestion at most four, or to conclude that no set of pairwise vertex-disjoint paths satisfying the requests exists. In other words, we ask for a solution for \textsc{DDPC-$4$} or a certificate that there is no solution for \textsc{DDP}. They proved that this problem admits an \textsf{XP} algorithm with parameter $k$ in general digraphs, and claimed --without a proof-- that Slivkins' reduction~\cite{Slivkins.03} can be modified to show that it is \textsf{W}[1]-hard in DAGs. In their celebrated proof of the Directed Grid Theorem, Kawarabayashi  and  Kreutzer~\cite{Kawarabayashi:2015:DGT:2746539.2746586} claimed that an \textsf{XP} algorithm can be obtained for the asymmetric version with congestion at most three. To the best of our knowledge, the existence of an \textsf{XP} algorithm in general digraphs for the \textsc{DDPC-$2$} problem, or even for its asymmetric version, remains open.

Summarizing, the existing positive results in the literature for parameterizations and/or relaxations of the \textsc{Directed Disjoint Paths} problem in {\sl general digraphs} are quite scarce.

\medskip
\noindent \textbf{Our approach, results, and techniques.} In this article, we propose another congestion metric for \textsc{DDP}.
In contrast to the usual relaxations discussed above, which focus on a \textsl{local} congestion metric that applies to every vertex, our approach considers, on top of local congestion, a \textsl{global} congestion metric: we want to keep control of how many vertices (a global metric) appear in ``too many'' paths (a local metric) of the solution.
That is, we want the paths to be such that ``most'' vertices of the graph do not occur in too many paths, while allowing for any congestion in the remaining vertices.
In the particular case where we do not allow for local congestion, we want the paths to be pairwise vertex-disjoint not in the whole graph, but in an unspecified part of size prescribed by a parameter; this is why we call such paths ``disjoint enough''.

Formally, in the \tbd~(\tbdp) problem, we are given a set of requests $\{(s_1, t_1), \ldots, (s_k, t_k)\}$ in a digraph $D$ and two non-negative integers $c$ and $s$, and the task is to find a collection of paths $\{P_1, \ldots, P_k\}$ such that each $P_i$ is a path from $s_i$ to $t_i$ in $D$ and at most $c$ vertices of $D$ occur in more than $s$ paths of the collection.
If $s=1$, for instance, we ask for the paths to be pairwise vertex-disjoint in at least $n-c$ vertices of the graph, and allow for at most $c$ vertices occurring in two or more paths.
Choosing $c = 0$ and $s=1$, \tbdp is exactly the \textsc{DDP} problem and, choosing $s=0$, \tbdp is exactly the \textsc{Steiner Network} problem (see~\cite{feldmann_et_al:LIPIcs:2016:6306} for its definition).

Applying simple reductions from \textsc{DDP} and \textsc{DDPC}, we show that \textsc{DEDP} is \textsf{NP}-complete for fixed $k \geq 3$ and $s = 1$, and \textsf{W}[1]-hard in DAGs with parameter $k$, respectively, even if $c$ is large with respect to $n$ in both cases. Namely, if $c$ is at most $n - n^{\alpha}$ for some real value $0 < \alpha \leq 1$.
%even if $c$ is large with respect to $n$, namely at most $n - n^{\alpha}$ for some real value $0 < \alpha \leq 1$, and \textsf{W}[1]-hard in DAGs with parameter $k$.
By applying the framework of Johnson et al.~\cite{Johnson.Robertson.Seymour.Thomas.01}, we give an $n^{\Ocal(k+w)}$ algorithm to solve \textsc{DEDP} in digraphs with directed tree-width at most $w$.

The fact that \tbdp is \textsf{NP}-complete for fixed values of $k=2$, $c=0$, and $s=1$~\cite{FORTUNE1980111} motivates us to consider the ``dual'' parameter $d=n-c$. That is, instead of bounding from above the number of vertices of $D$ that lie in the intersection of many paths of a collection satisfying the given requests, we want to bound from below the number of vertices that occur only in few paths of the collection.
Formally, we want to find  $X \subseteq V(D)$ with $|X| \geq d$ such that there is a collection of paths $\mathcal{P}$ satisfying the given requests such that every vertex in $X$ is in at most $s$ paths of the collection. We first prove, from a reduction from the \textsc{Independent Set} problem, that \tbdp is \textsf{W}[1]-hard with parameter $d$ for every fixed $s \geq 0$,  even if the input graph is a DAG and all source vertices of the request set are the same.

Our main contribution consists of positive algorithmic results for this dual parameterization. On the one hand, we give an algorithm for \tbdp running in time $\Ocal(n^d \cdot k^{d \cdot s})$. This algorithm is not complicated, and basically performs a brute-force search over all vertex sets of size $d$, followed by $k$ connectivity tests in a digraph $D'$ obtained from $D$ by an appropriate local modification. On the other hand, our
 most technically involved result is a kernel for \tbdp with at most
$d\cdot 2^{k-s} \cdot \binom{k}{s}$ non-terminal vertices. This algorithm first starts by a reduction rule that eliminates what we call \emph{congested} vertices; we say that the resulting instance is \emph{clean}. We then show that if $D$ is clean and sufficiently large, and $k = s+1$, then the instance is positive and a solution can be found in polynomial time. This fact is used as the base case of an iterative algorithm.
Namely, we start with the original instance and proceed through $k-s+1$ iterations.
At each iteration, we choose one path from some $s_i$ to its destination $t_i$ such that a large part of the graph remains unused by any of the pairs chosen so far (we prove that such a request always exists) and consider only the remaining requests for the next iteration.
We repeat this procedure until we arrive at an instance where the number of requests is exactly $s+1$, and use the base case to output a solution for it.
From this solution, we extract in polynomial time a solution for the original instance, yielding a kernel of the claimed size.

Since positive results for the \textsc{Directed Disjoint Paths} problem are not common in the literature, especially in general digraphs, we consider our algorithmic results to be of particular interest.
Furthermore, the kernelization algorithm also brings good news for the \textsc{Steiner Network} problem: when $s = 0$
Feldmann and Marx in~\cite{feldmann_et_al:LIPIcs:2016:6306} showed that the tractability of the \textsc{Steiner Network} problem when parameterized by the number of requests depends on how the requests are structured.
Our result adds to the latter by showing that the problem remains \textsc{FPT} if we drop this structural condition on the request set but add $d$, the number of vertices occurring in at most $s$ paths of the solution, as a parameter.
More details can be found in Section~\ref{sec:defs}.

Table~\ref{table:summary_of_results} shows a summary of our algorithmic and complexity results, which altogether provide an accurate picture of the parameterized complexity of the \tbdp problem for distinct choices of the parameters.

\begin{table}[h!tb]
\centering
\begin{tabular}{|c|c|c|c|l|}
\hline
$k$   & $d$ & $s$ & $w$ & Complexity\\ \hline
 fixed $\geq 3$ & $ \Omega(n^\alpha)$ & fixed $ = 1 $ & \---- & \textsf{NP}-complete (Theorem~\ref{theorem:npc_for_k})\\ \hline
  parameter & $ \Omega(n^\alpha)$ & fixed $ \geq 1 $ & 0 & \textsf{W}[1]-hard (Theorem~\ref{theorem:npc_for_k})\\ \hline
  input & parameter & fixed $\geq 0$ & \---- & \textsf{W}[1]-hard (Theorem~\ref{theorem:w1_hardness_for_d})\\ \hline
parameter  & \---- & \---- & parameter & \textsf{XP} (Theorem~\ref{thm:XPdtw}) \\ \hline
input  & parameter & parameter & \---- & \textsf{XP} (Theorem~\ref{algorithm:xp_d_s_1})\\ \hline
parameter  & parameter & parameter & \---- & \textsf{FPT} (Theorem~\ref{corollary:kernel_for_k_d_s_problem})\\ \hline
\end{tabular}
\medskip
\caption{Summary of hardness and algorithmic results for distinct choices of the parameters. A horizontal line in a cell means no restrictions for that case, and here we denote by $w$ the directed tree-width of the input digraph. In all cases, we have that $c=n-d$.}
\label{table:summary_of_results}
\end{table}%

\vspace{-.5cm}
\noindent \textbf{Organization.} In Section~\ref{sec:defs} we present some preliminaries relevant to all parts of this article and formally define the  \textsc{Disjoint Enough Directed Paths} problem.
We provide the hardness results in Section~\ref{section:lower_bounds} and the algorithms in Section~\ref{sec:algo}. %\ra{R1 (2): mentioned where the definitions related to directed tree-width are included.}
The corresponding notations and definitions related to directed tree-width are presented in Section~\ref{subsection:algorithm_dtw}, where they are used.
We conclude the article in Section~\ref{subsection:poly_kernel_open} with some open questions for further research.

% Section 2B
\section{Preliminaries and definitions}
\label{sec:defs}

For a graph $G = (V,E)$, directed or not, and a set $X \subseteq V(G)$, we write $G - X$ for the graph resulting from the deletion of $X$ from $G$ and $G[X]$ for the graph induced by $X$.
We also write $G' \subseteq G$ to say that $G'$ is a subgraph of $G$.
If $e$ is an edge of a directed or undirected graph with endpoints $u$ and $v$, we may refer to $e$ as $(u,v)$.
If $e$ is an edge of a digraph, we say that $e$ has \emph{tail} $u$, $\emph{head}$ $v$ and is \emph{oriented} from $u$ to $v$.

The \emph{in-degree} $\deg^-_D(v)$ (resp. \emph{out-degree} $\deg^+_D(v))$ of a vertex $v$ in a digraph $D$ is the number of edges with head (resp. tail) $v$.
The \emph{degree} $\deg_D(v)$ of $v$ in $D$ is the sum of $\deg_D^-(v)$ with $\deg_D^+(v)$.
The \emph{in-neighborhood} $N^-_D(v)$ of $v$ is the set $\{u \in V(D) \mid (u,v) \in E(G)\}$, and the \emph{out-neighborhood} $N^+_D(v)$ is the set $\{u \in V(D) \mid (v,u) \in E(G)\}$.
We say that $u$ is an \emph{in-neighbor} of $v$ if $u \in N^-_D(v)$ and that $u$ is an \emph{out-neighbor} of $v$ if $u \in N^+_D(v)$.

A \emph{walk} in a digraph $D$ is an alternating sequence $W$ of vertices and edges that starts and ends with a vertex, and such that for every edge $(u,v)$ in the walk, vertex $u$ (resp. vertex $v$) is the element right before (resp. right after) edge $(u,v)$ in $W$. A walk is a \emph{path} if all the vertices in it are  distinct. All paths mentioned henceforth, unless stated otherwise, are considered to be directed.

An \emph{orientation} of an undirected graph $G$ is a digraph $D$ obtained from $G$ by choosing an orientation for each edge $e \in E(G)$.
The undirected graph $G$ formed by ignoring the orientation of the edges of a digraph $D$ is the \emph{underlying graph} of $D$.

A digraph $D$ is \emph{strongly connected} if, for every pair of vertices $u,v \in V(D)$, there is a walk from $u$ to $v$ and a walk from $v$ to $u$ in $D$.
We say that $D$ is \emph{weakly connected} if the underlying graph of $D$ is connected.
A \emph{separator} of $D$ is a set $S \subsetneq V(D)$ such that $D \setminus S$ is not strongly connected.
If $|V(D)| \geq k+1$ and $k$ is the minimum size of a separator of $D$, we say that $D$ is \emph{$k$-strongly connected}.
A \emph{strong component} of $D$ is a maximal induced subdigraph of $D$ that is strongly connected, and a \emph{weak component} of $D$ is a maximal induced subdigraph of $D$ that is weakly connected.

Unless stated otherwise, $n$ will always denote the number of vertices of the input graph.
For an integer $\ell \geq 1$, we denote by $[\ell]$ the set $\{1, 2, \ldots, \ell\}$.
We make use of Menger's Theorem~\cite{Menger1927} for digraphs.
Here a $(u,v)$-\emph{separator} is a set of vertices $X$ such that there is no path from $u$ to $v$ in $D-X$.
\begin{theorem}[{Menger's Theorem}]
Let $D$ be a digraph and $u,v \in V(D)$ such that $(u,v) \not \in E(D)$.
Then the minimum size of a $(u,v)$-separator equals the maximum number of pairwise internally vertex-disjoint paths from $u$ to $v$.
\end{theorem}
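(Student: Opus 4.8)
The plan is to prove the two inequalities separately: the routine direction directly, and the substantive one by a reduction to the integral max-flow min-cut theorem. For the easy direction, suppose $P_1, \dots, P_\ell$ are pairwise internally vertex-disjoint $(u,v)$-paths, and let $X$ be any $(u,v)$-separator (which, as usual, we take to avoid $u$ and $v$). Since $(u,v) \notin E(D)$, every $P_i$ has at least one internal vertex, and as $D-X$ contains no $(u,v)$-path, $X$ must meet the internal vertices of each $P_i$; by internal disjointness these meeting points are distinct, so $|X| \geq \ell$. Taking $\ell$ maximum and $X$ minimum yields $\text{(min separator)} \geq \text{(max disjoint paths)}$. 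Note that the non-adjacency hypothesis is precisely what guarantees that an internal separator exists at all and that each path contributes a distinct internal vertex.

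For the reverse inequality I would build a capacitated flow network $N$ via the standard vertex-splitting gadget. Replace every vertex $w \in V(D) \setminus \{u,v\}$ by two vertices $w^{\mathrm{in}}, w^{\mathrm{out}}$ joined by an arc $(w^{\mathrm{in}}, w^{\mathrm{out}})$ of capacity $1$, leave $u$ and $v$ unsplit, and turn each arc $(a,b) \in E(D)$ into an arc of capacity $+\infty$ from the out-copy of $a$ to the in-copy of $b$ (using $u$ or $v$ themselves when they occur as an endpoint). Treating $u$ as source and $v$ as sink, I claim that integral $(u,v)$-flows in $N$ correspond to families of internally vertex-disjoint $(u,v)$-paths in $D$: given an integral flow of value $\ell$, flow decomposition produces $\ell$ arc-disjoint $(u,v)$-paths together with possibly some circulations, which I discard, and the unit capacity on each split arc $(w^{\mathrm{in}}, w^{\mathrm{out}})$ forces every internal vertex $w$ to be traversed by at most one of these paths; projecting back to $D$ gives internally vertex-disjoint paths. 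Conversely, $\ell$ such paths each route one unit of flow. Leaving $u$ and $v$ unsplit is exactly what permits them to be shared by all paths, as internally-disjoint paths must be.

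It then remains to match minimum cuts with minimum separators. Because the only finite-capacity arcs of $N$ are the split arcs, every minimum $(u,v)$-cut consists solely of such arcs; the vertex set $\{w : (w^{\mathrm{in}}, w^{\mathrm{out}}) \text{ is cut}\}$ is then a $(u,v)$-separator of $D$ of the same size, and conversely each separator induces a cut of equal capacity. Hence the minimum cut capacity of $N$ equals the minimum $(u,v)$-separator size in $D$. Applying the max-flow min-cut theorem together with the integrality of a maximum flow under integer capacities, the maximum number of internally vertex-disjoint $(u,v)$-paths equals the minimum $(u,v)$-separator size, as desired.

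The step I expect to require the most care is the correspondence in the second paragraph: checking that flow decomposition, after discarding circulations, returns paths that are genuinely internally disjoint once the split pairs are contracted, and that $u$ and $v$ are treated correctly so that they may appear on all of them. A self-contained alternative that avoids flows is a direct induction on $|E(D)|$: pick an arc $e$, and either a minimum separator of $D-e$ already has the same size as one of $D$---in which case induction on $D-e$ finishes---or it is smaller by one, and one shows that the endpoints of $e$ can be adjoined to such a separator to reconstruct the required paths; the case analysis and bookkeeping there are the corresponding obstacle.
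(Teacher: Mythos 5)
The paper does not prove this statement: it is quoted as a classical result with a citation to Menger (1927), so there is no in-paper argument to compare against. Your proof is the standard textbook derivation and is correct. The easy inequality is handled properly, and you rightly point out that the hypothesis $(u,v) \notin E(D)$ is what forces every path to have an internal vertex (and, implicitly, what makes a separator disjoint from $\{u,v\}$ exist at all -- worth noting since the paper's definition of a $(u,v)$-separator does not explicitly exclude $u$ and $v$). The vertex-splitting network, the use of integrality of maximum flows, and the identification of finite-capacity cuts with separators are all sound. The one phrase I would tighten is ``flow decomposition produces $\ell$ arc-disjoint paths'': decomposition only guarantees that the flow on each arc is the sum over the paths using it, so on the infinite-capacity arcs the paths could in principle coincide. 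The rescue is exactly the observation you flag as delicate: two paths sharing an arc $(a^{\mathrm{out}}, b^{\mathrm{in}})$ would both traverse the unit-capacity split arc at $a$ or at $b$ unless $a=u$ and $b=v$ simultaneously, which is excluded by $(u,v)\notin E(D)$; hence each internal vertex, and in fact each arc, is used by at most one path, and the projections to $D$ are genuinely internally vertex-disjoint simple paths. With that observation spelled out, the argument is complete.
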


\subsection{Parameterized complexity}

We refer the reader to~\cite{DF13,CyganFKLMPPS15} for basic background on parameterized complexity, and we recall here only some basic definitions. A \emph{parameterized problem} is a language $L \subseteq \Sigma^* \times \mathbb{N}$.  For an instance $I=(x,k) \in \Sigma^* \times \mathbb{N}$, $k$ is called the \emph{parameter}.

A parameterized problem is \emph{fixed-parameter tractable} (\textsf{FPT}) if there exists an algorithm $\mathcal{A}$, a computable function $f$, and a constant $c$ such that given an instance $I=(x,k)$, $\mathcal{A}$   (called an \textsf{FPT} \emph{algorithm}) correctly decides whether $I \in L$ in time bounded by $f(k) \cdot |I|^c$. For instance, the \textsc{Vertex Cover} problem parameterized by the size of the solution is \textsf{FPT}.
	
A parameterized problem is \textsf{XP} if there exists an algorithm $\mathcal{A}$ and two computable functions $f$ and $g$ such that given an instance $I=(x,k)$, $\mathcal{A}$  (called an \textsf{XP} \emph{algorithm}) correctly decides whether $I \in L$ in time bounded by $f(k) \cdot |I|^{g(k)}$. For instance,  the \textsc{Clique} problem parameterized by the size of the solution is in  \textsf{XP}.

Within parameterized problems, the \textsf{W}-hierarchy may be seen as the parameterized equivalent to the class \textsf{NP} of classical decision problems. Without entering into details (see~\cite{DF13,CyganFKLMPPS15} for the formal definitions), a parameterized problem being \textsf{W}[1]-\emph{hard} can be seen as a strong evidence that this problem is \textsl{not} \textsf{FPT}.
The canonical example of \textsf{W}[1]-hard problem is \textsc{Clique}  parameterized by the size of the solution.

For an instance $(x, k)$ of a parameterized problem $Q$, a \emph{kernelization algorithm} is an algorithm $\mathcal{A}$ that, in polynomial time, generates from $(x, k)$ an equivalent instance $(x', k')$ of $Q$ such that $|x'| + k' \leq f(k)$, for some computable function $f : \mathbb{N} \rightarrow \mathbb{N}$. If $f(k)$ is bounded from above by a polynomial of the parameter, we say that $Q$ admits a \emph{polynomial kernel}.

A \emph{polynomial time and parameter reduction} is an algorithm  that, given an instance $(x, k)$ of a parameterized problem $A$, runs in time $f(k)\cdot |x|^{\Ocal(1)}$ and outputs an instance $(x', k')$ of a parameterized problem $B$ such that $k'$ is bounded from above by a polynomial on $k$ and $(x, k)$ is positive if and only if $(x',k')$ is positive. 

\subsection{The \textsc{Disjoint Enough Directed Paths} problem}\label{section:routing_and_related_problems}

Before defining the problem, we define requests and satisfying collections.
\begin{definition}[Requests and satisfying collections]\label{definition:requests}
Let $D$ be a digraph and $\mathcal{P}$ be a collection of paths  of $D$.
A \emph{request} in $D$ is an ordered pair of vertices of $D$.
For a request (multi)set $R = \{(s_1,t_1), (s_2, t_2), \ldots, (s_k, t_k)\}$, we say that the vertices $\{s_1, s_2, \ldots, s_k\}$ are \emph{source} vertices and that $\{t_1, t_2, \ldots, t_k\}$ are \emph{target} vertices, and we refer to them as $S(R)$ and $T(R)$, respectively.
We say that $\mathcal{P}$ \emph{satisfies} $I$ if $\mathcal{P} = \{P_1, \ldots, P_k\}$ and $P_i$ is a path from $s_i$ to $t_i$, for $i \in [k]$.
\end{definition}
We remark that a request multiset may contain many copies of the same pair, and that when considering the union of two or more of those multisets, we keep all such copies in the resulting request multiset.
For instance, if $R_1 = \{(u_1,v_1)\}$ and $R_2 = \{(u_1, v_1), (u_2, v_2)\}$ then $R_1 \cup R_2 = \{(u_1, v_1), (u_1, v_1), (u_2, v_2)\}$, and this indicates that a collection of paths satisfying this request set must contain two paths from $u_1$ to $v_1$.
To simplify the notation, we simply refer to request multisets as request sets.
The \tbdp~problem is defined as follows.\vspace{-.2cm}

\begin{boxproblem}[framed]{\tbd (\tbdp)}\label{problem:main_problem_c}
Input: & A digraph $D$, a request set $R$ of size $k$, and two non-negative integers $c$ and $s$.\\
%Parameters: & $k, c$ and $s$.\\
Output: & A collection of paths $\mathcal{P}$ satisfying $R$ such that at most $c$ vertices of $D$ occur in at least $s+1$ paths of $\mathcal{P}$ and all other vertices of $D$ occur in at most $s$ paths of $\mathcal{P}$.
\end{boxproblem}
Unless stated otherwise, we consider $d= n -c$ for the remaining of this article.
Intuitively, $c$ imposes an upper bound on the size of the ``\textbf{c}ongested'' part of the solution, while $d$ imposes a lower bound on the size of the ``\textbf{d}isjoint'' part.
For a parameterized version of \tbdp, we sometimes include the parameters before the name.
For instance, we denote by $(k,d)$-\tbdp the \tbd~problem with parameters $k$ and $d$.
We refer to instances of \textsc{DEDP} as $(D, R, k, c, s)$.

Notice that if $c \geq n$ or $s \geq k$, the problem is trivial since every vertex of the graph is allowed to be in all paths of a collection satisfying the requests, and thus we only need to check for connectivity between the given pairs of vertices.
Furthermore, if there is a pair $(s_i,t_i)$ in the request set such that there is no path from $s_i$ to $t_i$ in the input digraph $D$, the instance is negative.
Thus we henceforth assume that $c < n$, that $s < k$, and that there is a path from $s_i$ to $t_i$ in $D$ for every pair $(s_i,t_i)$ in the set of requests.

Choosing the values of $k, d$, and $s$ appropriately, we show in Table~\ref{table:summary_related_problems} that the \tbdp problem generalizes some problems in the literature.

\begin{table}[ht] %\vspace{-.4cm}
\renewcommand{\arraystretch}{1.5}
\centering
\begin{tabular}{|m{2cm}|m{4cm}|m{5cm}|}
\hline
Parameters & Equivalent to  & Complexity\\ \hline
$d = n$, $s = 1$ & \textsc{Directed Disjoint Paths}& \textsf{NP}-complete for $k =2$~\cite{FORTUNE1980111}\\ \hline
$d = n$, $s \geq 1$ & \textsc{Directed Disjoint Paths} \textsc{with Congestion} $s$ & \textsf{W}[1]-hard with parameter $k$~\cite{Slivkins.03,Amiri2016RoutingWC}\\ \hline
$d \geq 1$, $s = 0$ & \textsc{Steiner Network} & \textsf{FPT} with parameters $k$ and $d$\\ \hline
\end{tabular}
\medskip
\caption{Summary of related problems.}\vspace{-.2cm}
\label{table:summary_related_problems}
\end{table}%

The last line of Table~\ref{table:summary_related_problems} is of particular interest, and we focus on it in the next two paragraphs.
In the \textsc{Steiner Network} problem, we are given a digraph $D$ and a request set $R$ and we are asked to find an induced subgraph $D'$ of $D$ with minimum number of vertices such that $D'$ admits a collection of paths satisfying $R$.
For a request set $R$ in a digraph $D$, let $D(R)$ be the digraph with vertex set $S(R) \cup T(R)$ and edge set $\{(s,t) \mid (s,t) \in R\}$.
The complexity landscape of the \textsc{Steiner Network} problem when parameterized by the size of the request set was given by Feldmann and Marx~\cite{feldmann_et_al:LIPIcs:2016:6306}.
They showed that the tractability of the problem depends on $D(R)$.
Namely, they proved that if $D(R)$ is close to being a \emph{caterpillar}, then the \textsc{Steiner network} problem is \textsf{FPT} when parameterized by $|R|$, and \textsf{W}[1]-hard otherwise.
When parameterized by the size of the solution, Jones et al.~\cite{10.1007/978-3-642-40450-4_57} showed that the \textsc{Steiner Network} problem is \textsf{FPT} when $D(R)$ is a star whose edges are all oriented from the unique source and the underlying graph of the input digraph excludes a topological minor, and \textsf{W}$[2]$-hard on graphs of degeneracy two~\cite{10.1007/978-3-642-40450-4_57}.

Our algorithmic results for \tbdp for the particular case $s=0$ yield an \textsf{FPT} algorithm for another parameterized variant of the \textsc{Steiner Network} problem.
In this case, we want to decide whether $D$ admits a large set of vertices whose removal does not disconnect any pair of requests.
That is, we want to find a set $X \subseteq V(D)$ with $|X| \geq d$ such that $D - X$ contains a collection of paths satisfying $R$.
In Theorem~\ref{corollary:kernel_for_k_d_s_problem} we give an \textsf{FPT} algorithm  (in fact, a kernel) for this problem with parameters $|R|$ and $d$.
We remark that this tractability does not depend on $D(R)$.

%
%% Section 3
\section{Hardness results for \textsc{DEDP}}\label{section:lower_bounds}
In this section we provide hardness results for the \tbdp problem. Namely, we first provide in Theorem~\ref{theorem:npc_for_k} a simple reduction from \textsc{Disjoint Paths with Congestion}, implying \textsf{NP}-completeness for fixed values of $k,c,d$ when $s=1$ and \textsf{W}[1]-hardness in DAGs with parameter $k$ when $s \geq 1$ and $s < k$. We then prove in Theorem~\ref{theorem:w1_hardness_for_d}
that \tbdp is \textsf{W}[1]-hard in DAGs with parameter $d$.

\medskip

As mentioned in~\cite{10.1007/978-3-642-40450-4_57}, the \textsc{Steiner Network} problem is \textsf{W}$[2]$-hard when parameterized by the size of the solution (as a consequence of the results of~\cite{Molle2008}).
Hence $(c)$-\tbdp is \textsf{W}$[2]$-hard for fixed $s = 0$. As discussed in the introduction, the \textsc{Directed Disjoint Paths} problem is \textsf{NP}-complete for fixed $k=2$~\cite{FORTUNE1980111} and \textsf{W}[1]-hard with parameter $k$ in DAGs~\cite{Slivkins.03}.
In addition, \textsc{Directed Disjoint Paths with Congestion} parameterized by the number of requests is also \textsf{W}[1]-hard in DAGs for every fixed congestion $s \geq 1$, as observed in~\cite{Amiri2016RoutingWC}.
When $c=0$ and $s\geq 1$, \tbdp is equivalent to the \textsc{Directed Disjoint Paths with Congestion} problem and thus the aforementioned bounds apply to it as well.
In the following theorem we complete this picture  by showing that \tbdp is \textsf{NP}-complete for fixed $k \geq 3$ and $s =1$, even if $c$ is quite large with respect to $n$ (note that if $c=n$ all instances are trivially positive), namely for
$c$ as large as $n - n^\alpha$ with $\alpha$ being any fixed real number such that $0 < \alpha \leq 1$. The same reduction also allows to prove \textsf{W}[1]-hardness in DAGs with parameter $k$.
The idea is, given the instance of \textsc{DDPC} with input digraph $D$, build an instance of \tbdp where the ``disjoint'' part corresponds to the original instance, and the ``congested'' part consists of $c$ new vertices that are necessarily used by $s+1$ paths of any solution.
In this process, we generate an instance of \textsc{DEDP} in a digraph $D'$ with $|V(D')| = n = d + c$ and $d = |V(D)|$.
This is why we restrict the value of $d$ to be of the form $n^\alpha$, but not smaller: if we ask $d$ to be ``too small'', for example $d = \log n$, our procedure would generate an instance of \textsc{DEDP} such that the size of the ``disjoint part'' $d$ satisfies $d = \log(d + c)$ which in turn implies that the size of this instance would be exponential on the size of the original instance of \textsc{DDPC}.

Following, we refer to instances of \textsc{DDPC} with input graph $D$, request set $R$, $k = |R|$, and congestion $s$ as $(D, R, k, s)$.
We remind the reader that we can assume that $s < k$, since otherwise the problem reduces to a simple connectivity check between every pair of vertices in the request set.
 
\begin{theorem}\label{theorem:npc_for_k}
Let $0 < \alpha \leq 1$, $d: \mathbb{N} \to \mathbb{N}$ with ${\bf d}(n) = \Omega(n^\alpha)$, and ${\bf c}(n) = n - {\bf d}(n)$.
Then, for $c = {\bf c}(n)$ and $d = {\bf d}(n)$,
\begin{romanenumerate}
	\item  {\sc DEDP} is $\mathsf{NP}$-complete for every fixed $k \geq 3$ and $s = 1$; and
	\item  $(k)$-{\sc DEDP} is $\mathsf{W[1]}$-hard in DAGs for every fixed $s \geq 1$.
\end{romanenumerate}
\end{theorem}
\begin{proof}
We prove items \textbf{(i)} and \textbf{(ii)} at the same time by a simple reduction from the \textsc{Directed Disjoint Paths with Congestion} (\textsc{DDPC})~problem.
Given an instance $(D,R, k,s)$ of \textsc{DDPC}, we output an equivalent instance $(D', R', k+s, c, s)$ of \tbdp that does not generate any new cycles and such that the size $d(|V(D')|)$ of the disjoint part of the new instance is equal to $|V(D)|$, with ${\bf d}(n)$ as in the statement of the theorem.
Since \textsc{DDP}, which is exactly the \textsc{DDPC} problem with congestion $s = 1$,
is \textsf{NP}-complete for fixed $k \geq 2$~\cite{FORTUNE1980111} and $k$-\textsc{DDPC} is \textsf{W}[1]-hard in DAGs~\cite{Amiri2016RoutingWC}, our reduction implies that \tbdp with $c = n - {\bf d}(n)$ is \textsf{NP}-complete for every fixed $k\geq 3$ and $s = 1$, and \textsf{W}[1]-hard in DAGs with parameter $k$ and any fixed $s \geq 1$.
We can assume that $c \geq 1$ since \tbdp is exactly \textsc{DDPC} when $c = 0$ and $s \geq 1$ (as discussed previously).

Formally, let $(D, R, k, s)$ be an instance of \textsc{DDPC} with $R = \{(s_1, t_1), (s_2, t_2), \ldots, (s_k, t_k)\}$ and choose $i \in [k]$ arbitrarily.
We construct an instance of \tbdp as follows.
Let $D'$ be a digraph constructed by adding to $D$ a path with vertex set $\{v_1, \ldots, v_c\}$ and an edge from $t_i$ to $v_1$.
Then, add to $R'$ all pairs in $(R \setminus \{s_i, t_i\})$, the pair $(s_i, v_c)$, and $s$ copies of the pair $(v_1, v_{c})$.
%Add to $E(D')$ all edges from $E(D)$ plus the set $(v_j, v_{j+1})$ for $j \in [c-1]$. 
%Finally, add to $R'$ all pairs in $(R \setminus \{s_i, t_i\})$, the pair $(s_i, v_c)$, and $s$ copies of the pair $(v_1, v_{c})$.
Figure~\ref{fig_NP_completeness_proof} illustrates this construction.
\begin{figure}[h!]
\centering
\scalebox{1}{
\begin{tikzpicture}[square/.style={regular polygon,regular polygon sides=4}]
\draw[rounded corners] (0,-0.15) rectangle  (5,2.15) node [above,xshift=-2.5cm] {$V(D)$};
%\draw[rounded corners] (0,0) rectangle (5,1) node [above,xshift=-1cm,xshift=-2.5cm] {$S$}; 
%\draw[rounded corners] (1,2) rectangle (4,3) node [above,xshift=.5cm,yshift=-.75cm] {$Z$};
%\node (P-s) at (1.5, 4.2) {$S$};
%\node (P-v) at (1.5, -0.2) {$V(D) - (Z \cup S)$};
%\node (P-z) at (5.7, 2) {$Z$};
\node[square,draw, fill=black,scale=.5,label=below:$s_2$] (P-s2) at (0.5,0.5) {};
\node[square,draw, fill=black,label=above:$s_1$,scale=.5] (P-s1) at (0.5,1.5) {};
\node[blackvertex,scale=.5,label=below:$t_2$] (P-t2) at (4.5,0.5) {};
\node[square,draw, fill=black,scale=.5,label=above:$t_1$] (P-t1) at (4.5,1.5) {};
\node[square,draw, fill=black,scale=.5,label=right:$v_c$] (P-vc) at (9,0.5) {};
%\node[blackvertex,scale=.5,label=above:$s_3$] (P-s3) at (4.5,1.5) {};

\node[square,draw, fill=black,scale=.5, label=above:$v_1$] (P-v1) at (P-t2) [xshift=2cm] {};
%\node[blackvertex,scale=.5] (P-vc) at (P-t3) [xshift=-2cm] {};
\draw[-{Latex[length=2mm, width=2mm]}, shorten >= 1] (P-t2) -- (P-v1);
\draw[-{Latex[length=2mm, width=2mm]}, shorten >= 1] (P-v1) -- (P-vc);
\foreach \i in {1,...,6}{
	\node[blackvertex,scale=.25] (P-x\i) at (P-v1) [xshift=2*\i cm] {};
}
\draw [decorate,decoration={mirror,brace,amplitude=10pt}]
($(P-v1.south) - (0,.2cm)$) -- ($(P-vc.south) - (0, .2cm)$) node [black,midway,yshift=-0.6cm] {$\{v_1, \ldots, v_c\}$};
%\node[blackvertex,scale=.5,label=below:$s_3$] (P-s3) at (P-v1)[yshift=-2cm,xshift=-1cm] {};
%\node[blackvertex,scale=.5,label=above:$t_3$] (P-t4) at (P-vc)[yshift=2cm,xshift=1cm] {};
%\draw[-{Latex[length=2mm, width=2mm]}] (P-s3) to[bend left=20] (P-v1);
%\draw[-{Latex[length=2mm, width=2mm]}] (P-vc) to[bend right=20] (P-t4);

\draw [-{Latex[length=2mm, width=2mm]},decorate,decoration={snake,amplitude=4mm,segment length=10mm,post length=4mm}, shorten >= 1] (P-s1) -- (P-t1);
\draw [-{Latex[length=2mm, width=2mm]},decorate,decoration={snake,amplitude=4mm,segment length=10mm,post length=2mm}, shorten >= 1] (P-s2) -- (P-t2);

\end{tikzpicture}%
}%
\caption[Example of the construction from Theorem~\ref{theorem:npc_for_k}]{Example of the construction from Theorem~\ref{theorem:npc_for_k} with $k=2$, $s=1$, and $i=2$. Source and target vertices are represented by square vertices in the figure.}
\label{fig_NP_completeness_proof}
\end{figure}
It is easy to verify that $(D, R, k, s)$ is positive if and only if the instance $(D', R', k+s, c, s)$ of \tbdp is positive since every solution to the second contains $s+1$ copies of the path from $v_1$ to $v_c$ (one being to satisfy the pair $(s_i, v_c)$), every vertex in $V(D)$ can occur in at most $s$ paths of any solution.

Since $D'$ is formed by adding a path on $c$ vertices to a copy of $D$, it is constructed in time $\Ocal(|V(D)| + |E(D)| + c)$.
Choosing $c$ to satisfy $d(|V(D)'|) = |V(D)|$, the hypothesis that $d(|V(D)'|)=\Omega(|V(D)'|^\alpha)$ easily implies that $c = \Ocal(|V(D)|^{1/\alpha})$ which in turn implies that the procedure ends in polynomial time.\end{proof}

In Theorem~\ref{theorem:npc_for_k} we only prove \textsf{NP}-completeness for $s = 1$ since, as we discuss in Section~\ref{subsection:poly_kernel_open}, it is not known if \textsc{DDPC} is \textsc{NP}-complete for some fixed $k$ when the congestion is at least two.
Nevertheless, a positive answer to this question would also imply the \textsf{NP}-completeness of \textsc{DEDP} for some fixed $k$ and fixed $s \geq 1$ as the reduction used in the proof of Theorem~\ref{theorem:npc_for_k} implies that \textsc{DEDP} is as hard as \textsc{DDPC}.
Dealing with the local congestion metric $s$, however, is not the main objective of this article, which is to focus on the global congestion metric $c$ and its dual $d = n - c$.

Next, we show that $(d)$-\tbdp is \textsf{W}[1]-hard, even when the input graph is acyclic and all source vertices of the request set are the same.
The reduction is from the \textsc{Independent Set} problem  parameterized by the size of the solution, which is \textsf{W}[1]-hard~\cite{CyganFKLMPPS15,DF13}.

\begin{theorem}\label{theorem:w1_hardness_for_d} The {\sc DEDP} problem is $\mathsf{W}[1]$-hard with parameter $d$  for every fixed $s \geq 0$, even when the input graph is acyclic and all source vertices in the request set are the same.
\end{theorem}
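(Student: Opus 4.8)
The plan is to give a polynomial-time reduction from \textsc{Independent Set}, parameterized by the size $r$ of the desired independent set, to $(d)$-\tbdp with $d=r$, producing an acyclic instance whose requests all emanate from a single source. Given a graph $G$ with vertex set $\{v_1,\dots,v_N\}$ and edge set $E$, I would build the digraph $D$ on the vertices $\{\sigma\}\cup\{x_i : i\in[N]\}\cup\{t_{ij} : \{v_i,v_j\}\in E\}$, with arcs $\sigma\to x_i$ for every $i$, and $x_i\to t_{ij}$, $x_j\to t_{ij}$ for every edge $\{v_i,v_j\}$. This is a three-layer DAG. The request set $I$ consists, for each edge $\{v_i,v_j\}$, of $2s+1$ copies of the request $(\sigma,t_{ij})$; thus $k=(2s+1)|E|$ and all sources equal $\sigma$. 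I set $c=n-r$ (so $d=r$), where $n=1+N+|E|$.

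The intended correspondence is that the ``light'' set $X$ (vertices lying on at most $s$ paths) must consist of vertices $x_i$ whose indices form an independent set of $G$. The key structural facts to establish are: \textbf{(a)} $\sigma$ and every $t_{ij}$ are \emph{necessarily} congested --- $\sigma$ lies on all $k\ge s+1$ paths, and each $t_{ij}$ is the endpoint of its $2s+1\ge s+1$ copies --- so the only vertices that can ever be light are the $x_i$; and \textbf{(b)} for each edge $\{v_i,v_j\}$, every satisfying collection routes its $2s+1$ copies of $(\sigma,t_{ij})$ through $x_i$ or $x_j$, so if both $x_i$ and $x_j$ were light they would jointly carry at most $2s$ paths, contradicting that $2s+1$ copies must pass through them. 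Fact (b) is exactly where the multiplicity $2s+1$ is chosen, via a pigeonhole argument, to force that the indices of the light $x_i$'s form an independent set.

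For the two directions: if $G$ has an independent set $S$ of size $r$, I route, for each edge $\{v_i,v_j\}$, all $2s+1$ copies through the endpoint not in $S$ (which exists since $S$ is independent); then each $x_i$ with $i\in S$ carries no path, so $X=\{x_i:i\in S\}$ witnesses $|X|=r=d$ with at most $c$ congested vertices. Conversely, from a solution with at most $c=n-r$ congested vertices, at least $r$ vertices are light; by (a) these are among the $x_i$, and by (b) their indices form an independent set of size at least $r$. Since $d=r$ is linear in the parameter and the construction is polynomial (and uniform in $s$), this is a valid parameterized reduction and establishes \textsf{W}[1]-hardness, with $D$ acyclic and all sources equal to $\sigma$.

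The main obstacle I anticipate is pinning down the congestion bookkeeping so that the equivalence is exact for every fixed $s$: one must verify that no vertex other than the $x_i$ can be made light (this is what forces $\sigma$ and the $t_{ij}$ to be congested under the chosen multiplicities), and that the multiplicity $2s+1$ is simultaneously large enough to forbid two adjacent light $x_i$'s yet small enough that a single non-selected endpoint can absorb all copies of an incident edge. Confirming that $D$ is acyclic with a single common source, and noting that the trivial edgeless case of \textsc{Independent Set} does not interfere (so that $\sigma$ and the $t_{ij}$ are indeed congested), are routine but should be stated explicitly.
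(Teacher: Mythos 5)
Your reduction is correct, and it is essentially the same construction as the paper's: both reduce from \textsc{Independent Set} to a three-layer DAG with a single source, one middle vertex per vertex of $G$, one sink per edge of $G$, and requests from the source to the edge-sinks, so that the ``light'' vertices (those on at most $s$ paths) are forced to be middle vertices whose indices form an independent set. The difference is in the congestion bookkeeping. The paper pre-loads each middle vertex $v$ with $s$ dedicated requests $(s,v)$ and uses only \emph{one} copy of each edge request, so that $v$ is light iff no edge path is routed through it; you instead drop the vertex requests and use $2s+1$ copies of each edge request, getting independence by pigeonhole (two light endpoints could jointly absorb at most $2s$ of the $2s+1$ copies). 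Your variant buys something concrete: with multiplicity $2s+1$ every edge-sink $t_{ij}$ lies on $2s+1>s$ paths and is necessarily congested, which is exactly what makes your fact (a) go through and confines the light set to the middle layer. In the paper's construction each edge-sink $v_e$ lies on only one path, so for $s\ge 1$ all the $v_e$'s are automatically light; the converse direction as written there (``$X$ is a solution only if $(s,v_{(u,w)})\notin I$ for every $u,w\in X$'') tacitly assumes $X\subseteq V(G)$ and needs an extra accounting step, or extra copies of the edge requests, to exclude light sets consisting of edge-sinks. So your version is not only correct but treats this point more carefully. The only items to state explicitly are the routine ones you already flagged: assume $r\le |V(G)|$ and $E(G)\neq\emptyset$ so that the source and the sinks are genuinely congested, and note that $d=r$ and the polynomial running time make this a valid parameterized reduction for each fixed $s$.
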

\begin{proof}
Let $(G, d)$ be an instance of the \textsc{Independent Set} problem, in which we want to decide whether the (undirected) graph $G$ contains an independent set of size at least $d$, and $s$ be a non-negative integer.
Let $V^E$ be the set $\{v_{e} \mid e \in E(G)\}$ and $D$ a directed graph with vertex set $V(G) \cup \{r\} \cup V^E$.
Add to $D$ the following edges:
\begin{itemize}
	\item[$\bullet$] for every $v \in V(G)$, add the edge $(r,v)$; and
	\item[$\bullet$] for every edge $e \in E(G)$ with endpoints $u$ and $w$, add the edges $(u,v_{e})$ and $(w,v_{e})$.
\end{itemize}
Finally, for every $v_e \in V^E$, add $2s+1$ copies of the pair $(r,v_e)$ to $R$.
Figure~\ref{fig:construction_w1_hardness}  illustrates this construction.
\begin{figure}[htb]
\centering
\scalebox{0.5}{
\begin{tikzpicture}[scale=2]
	\node[blackvertex,scale=1,label=left:{\huge $u$}] (P-u) at (-2,-2) {};

	\node[blackvertex,scale=1,label=right:{\huge $w$}] (P-w) at (2,-2) {};

	\node[blackvertex,scale=1,label=above:{\huge $r$}] (P-r) at (0,-0.5) {};

	\node[blackvertex,scale=1,label=below:{\huge $v_e$}] (P-e) at (0,-2) {};

\draw[-{Latex[length=3mm, width=2mm]}] (P-r) to [out = 180, in = 90]  (P-u);
\draw[-{Latex[length=3mm, width=2mm]}] (P-r) to [out = 0, in = 90]  (P-w);

\foreach \i in {25}{
	\draw[-{Latex[length=3mm, width=2mm]}, dashed] (P-r) to [bend left=2*\i]   (P-e);
	\draw[-{Latex[length=3mm, width=2mm]}, dashed] (P-r) to [bend right=2*\i]   (P-e);
}

\draw[-{Latex[length=3mm, width=2mm]}, dashed] (P-r) to (P-e);
\foreach \x/\y in {
u/e, w/e}
{
	\draw[-{Latex[length=3mm, width=2mm]}] (P-\x) to   (P-\y);
}
\end{tikzpicture}%
}%
\caption{Example of the construction from Theorem~\ref{theorem:w1_hardness_for_d} with $s=1$ and $e = (u,w)$. A dashed line indicates a request in $R$.}% containing $r$ as a source vertex.}
\label{fig:construction_w1_hardness}
\end{figure}

Notice that each vertex $v_e$ of $D$ associated with an edge $E$ of $D$ has out-degree zero in $D$ and  $r$ has in-degree zero.
Moreover, every edge of $D$ has as extremity either $r$ or a vertex of the form $v_e$.
Thus $D$ is a acyclic, as desired.
Furthermore $|S(R)| = 1$ since all of its elements are of the form $(r, v_e)$, for $e \in E(G)$.
We now show that $(G,d)$ is positive if and only if $(D, R, k, c, s)$ is positive, where $k = |R| = m\cdot(2s+1)$.

For the necessity, let $X$ be an independent set of size $d$ in $G$.
Start with a collection $\mathcal{P} = \emptyset$.
We classify the edges of $G$ into two sets: the set $E_1$ containing all edges with both endpoints in $V(G) - X$, and the set $E_2$ containing all edges with exactly one endpoint in $X$.
Now, for each $e \in E_1$, choose arbitrarily one endpoint $u$ of $e$ and add to $\mathcal{P}$ $2s+1$ copies of the path in $D$ from $r$ to $v_e$ using $u$.
For each $e \in E_2$ with $e = (u,w)$ and $w \not \in X$, add to $\mathcal{P}$ $2s+1$ copies of the path in $D$ from $r$ to $v_e$ using $w$.
Since $X$ is an independent set, no vertex in $X$ occurs in any path of $\mathcal{P}$, and since $E(G) = E_1 \cup E_2$, $\mathcal{P}$ satisfies $R$ and the necessity follows as $c = n - d$.

Let $\mathcal{P}$ be a solution for $(D, R, k, c, s)$ and $X \subseteq V(D)$ be a set of vertices with $|X| = d$ and such that each vertex of $X$ occurs in at most $s$ paths of $\mathcal{P}$. Such choice is possible since $d = n - c$.
For contradiction, assume that $X$ is not an independent set in $G$.
Then there is an edge $e \in E(G)$ with $e=(u,w)$ and $u,w \in X$, and $2s+1$ copies of the request $(r, v_e)$ in $R$.
Thus each path satisfying one of those requests uses $u$ or $w$, but not both, and therefore either $u$ or $w$ occurs in at least $s+1$ paths of $\mathcal{P}$, a contradiction.
We conclude that $X$ is an independent set in $G$ and the sufficiency follows.
\end{proof}
%
%% Section 4
\section{Algorithms for DEDP}
\label{sec:algo}
In this section we focus on algorithmic results for \tbdp.
In Theorem~\ref{theorem:npc_for_k} we showed that  \tbdp is \textsf{NP}-complete for every fixed $k \geq 3$ and a large range of values of $c$, and showed that considering only $d$ as a parameter is still not enough to improve the tractability of the problem: Theorem~\ref{theorem:w1_hardness_for_d} states that $(d)$-\tbdp is \textsf{W}[1]-hard in DAGs even if all requests share the same source.
Thus \tbdp is as hard as the \textsc{Directed Disjoint Paths} problem when $k$ is \textit{not} a parameter.
Here we show that, similarly to the latter, \tbdp admits an \textsf{XP} algorithm when parameterized by the number of requests and the directed tree-width of the input digraph.
Then, we show how the tractability of \tbdp improves when we consider stronger parameterizations including $d$.
Namely, we  show that \tbdp is \textsf{XP} with parameters $d$ and $s$ (cf. Theorem~\ref{algorithm:xp_d_s_1}), and \textsf{FPT} with parameters $k$ and $d$ (hence $s$ as well, since we may assume that $k > s$ as discussed in Section~\ref{section:routing_and_related_problems}; cf. Theorem~\ref{corollary:kernel_for_k_d_s_problem}).
It is worth mentioning that this kind of \emph{dual parameterization} (remember that $d = n - c$) has proved useful in order to improve the tractability of several notoriously hard problems (cf. for instance~\cite{ChorFJ04,BasavarajuFRS16,DuhF97,AraujoCLSSS18,Amiri2016RoutingWC}).

In Section~\ref{subsection:algorithm_dtw} we formally define directed tree-width and arboreal decompositions, as provided by Johnson et al.~\cite{Johnson.Robertson.Seymour.Thomas.01}, and apply the ideas and results they used to show an \textsf{XP} algorithm for the \textsc{Directed Disjoint Paths} problem parameterized by the number of requests in digraphs of bounded directed tree-width to show that a similar result holds for \tbdp.
In Section~\ref{section:algorithm:xp_d_s_1} we show our algorithms for parameterizations of \tbdp including $d$ as a parameter.

\subsection{An \texorpdfstring{$\mathsf{XP}$}{XP} algorithm with parameters \texorpdfstring{$k$}{k} and \texorpdfstring{$\mathsf{dtw}(D)$}{dtw(D)}}\label{subsection:algorithm_dtw}
%\ra{R1 (2): Moved here the contents regarding directed tree-width and arboreal decompositions.}
%\subsection{Arboreal decompositions and directed tree-width}
%\label{sec:dtw}
Given the success obtained in the design of efficient algorithms in undirected graphs of bounded tree-width (cf.~\cite{Cook.Seymour.2003,COURCELLE199012}, for example), and  the enormous success achieved by the Grid Theorem~\cite{Robertson.Seymour.86} and by the \emph{Bidimensionality} framework~\cite{Demaine:2005:SPA:1101821.1101823}, it is no surprise that there was interest in finding an analogous definition for digraphs.
As the tree-width of an undirected graph measures, informally, its distance to being a tree, the \emph{directed tree-width} of a digraph, as defined by Johnson et al.~\cite{Johnson.Robertson.Seymour.Thomas.01}, measures its distance to being a DAG, and an \emph{arboreal decomposition} of a digraph exposes a (strong) connectivity measure of the original graph.
The authors conjectured the existence of a grid-like theorem for their width measure~\cite{Johnson.Robertson.Seymour.Thomas.01}.
In a recent breakthrough, Kawarabayashi and Kreutzer~\cite{Kawarabayashi:2015:DGT:2746539.2746586} proved this conjecture to be true: they showed that there is a computable function $g$ such that every digraph of directed tree-width at least $g(k)$ has a \emph{cylindrical grid} as a \emph{butterfly minor}\footnote{The full version of~\cite{Kawarabayashi:2015:DGT:2746539.2746586},  available at https://arxiv.org/abs/1411.5681, contains all these definitions.}. Recently Campos et al.~\cite{LAGOS19} improved the running time of the algorithm that follows from the proof of Kawarabayashi and Kreutzer~\cite{Kawarabayashi:2015:DGT:2746539.2746586}, by locally modifying some steps of the original proof.

The technical contents of this section are mostly taken from~\cite{Johnson.Robertson.Seymour.Thomas.01}.
By an \emph{arborescence} $T$, we mean an orientation of a tree with root $r_0$ in such a way that all edges are pointing away from $r_0$.
If a vertex $v$ of $T$ has out-degree zero, we say that $v$ is a \emph{leaf} of $T$.
We now define guarded sets and arboreal decompositions of directed graphs.
From here on, we refer to oriented edges only, unless stated otherwise, and
$D$ will always stand for a directed graph.
All the considered directed graphs mentioned may contain directed cycles of length two.
\begin{definition}[$Z$-guarded sets]\label{def:guarded_sets}
Let $D$ be a digraph, let $Z \subseteq V(D)$, and $S \subseteq V(D) \setminus Z$.
We say that $S$ is \emph{$Z$-guarded} if there is no directed walk in $D - Z$ with first and last vertices in $S$ that uses a vertex of $D - (Z \cup S)$.
For an integer $w \geq 0$, we say that $S$ is \emph{$w$-guarded} if $S$ is $Z$-guarded for some set $Z$ with $|Z| \leq w$.
\end{definition}
See Figure~\ref{fig:z-guarded} for an illustration of a $Z$-guarded set.
\begin{figure}[h!tb]
\centering
\scalebox{.75}{
\begin{tikzpicture}[scale=1, -{Latex[length=2mm, width=2mm]}, shorten >= .15cm, shorten <= .1cm]
\draw[rounded corners] (0,4) rectangle  (5,5) node [above,xshift=-2.5cm] {$V(D)\setminus (Z \cup S)$} ;
\draw[rounded corners] (0,0) rectangle (5,1) node [below,yshift=-1cm,xshift=-2.5cm] {$S$}; 
\draw[rounded corners] (1,2) rectangle (4,3) node [above,xshift=.5cm,yshift=-.75cm] {$Z$};
%\node (P-s) at (1.5, 4.2) {$S$};
%\node (P-v) at (1.5, -0.2) {$V(D) - (Z \cup S)$};
%\node (P-z) at (5.7, 2) {$Z$};
\node[blackvertex,scale=.5] (P-a) at (0.5,0.5) {};
\node (P-b) at (-0.5,2.5) {};
\node (P-c) at (0.25,4.5) {};
\node (P-d) at (0.85,4.75) {};
\node[blackvertex,label=210:{\Large$u$},scale=.5] (P-e) at (1.25,4.5) {};
\node (P-f) at (1.5,2.5) {};
\node (P-g) at (0.75,1.5) {};
\node[blackvertex,scale=.5] (P-h) at (1.5,0.5) {};

\draw[thick] plot[smooth] coordinates {(P-a) (P-b) (P-c) (P-d) (P-e) (P-f) (P-g) (P-h) };

\node[blackvertex,scale=.5] (P-a1) at (4,0.5) {};
\node (P-b1) at (5,2.5) {};
\node (P-c1) at (4.25,4.5) {};
\node (P-d1) at (3.75,4.75) {};
\node[blackvertex,label=-30:{\Large$v$},scale=.5] (P-e1) at (2.5,4.5) {};
\node (P-f1) at (3,2.5) {};
\node (P-g1) at (2.25,1.5) {};
\node[blackvertex,scale=.5] (P-h1) at (3,0.5) {};

\draw[thick] plot[smooth] coordinates {(P-h1) (P-g1) (P-f1) (P-e1) (P-d1) (P-c1) (P-b1) (P-a1)};

\draw[dashed] (P-e) to [bend right =30] (P-e1);

\end{tikzpicture}%
}%
\caption{A $Z$-guarded set $S$. The dashed line indicates that there cannot be a walk from $u$ to $v$ in $V(D)\setminus (Z \cup S)$.}
\label{fig:z-guarded}
\end{figure}
If a set $S$ is $Z$-guarded, we may also say that $Z$ is a \emph{guard} for $S$. We remark that in~\cite{Johnson.Robertson.Seymour.Thomas.01,Kawarabayashi:2015:DGT:2746539.2746586}, the authors use the terminology $Z$-\emph{normal} sets instead of $Z$-guarded sets.
In this article, we adopt the terminology used, for instance, in~\cite{Classes.Directed.Graphs}.

Let $T$ be an arborescence, $r \in V(T)$, $e \in E(T)$, and $r'$ be the head of $e$.
We say that $r > e$ if there is a path from $r'$ to $r$ in $T$ (notice that this definition implies that $r' > e$).
We say that $e \sim r$ if $r$ is the head or the tail of $e$.
To define the directed tree-width of directed graphs, we first need to introduce arboreal decompositions.
\begin{definition}[Arboreal decomposition and directed tree-width]
An \emph{arboreal decomposition} $\beta$ of a digraph $D$ is a triple $(T,\mathcal{X},\mathcal{W})$ where $T$ is an arborescence, $\mathcal{X} = \{X_e : e \in E(T)\}$, $\mathcal{W} = \{W_r : r \in V(T)\}$, and $\mathcal{X},\mathcal{W}$ are collections of sets of vertices of $D$ (called \emph{bags}) such that
\begin{enumerate}
\item[\textbf{(i)}] $\mathcal{W}$ is a partition of $V(D)$ into non-empty sets, and
\item[\textbf{(ii)}] if $e \in E(T)$, then $\bigcup\{W_r : r \in V(T) \text{ and } r > e\}$ is $X_e$-guarded.
\end{enumerate}
We say that $r$ is a \emph{leaf} of $(T,\mathcal{X,W})$ if $r$ has out-degree zero in $T$.

For a vertex $r \in V(T)$, we denote by $\width(r)$ the size of the set $W_r \cup (\bigcup_{e \thicksim r}X_e)$.
The \emph{width} of $(T,\mathcal{X},\mathcal{W})$ is the least integer $k$ such that, for all $r \in V(T)$, $\width(r) \leq k+1$.
The \emph{directed tree-width}  of $D$, denoted by $\dtw(D)$, is the least integer $k$ such that $D$ has an arboreal decomposition of width $k$.
\end{definition}
The left hand side of Figure~\ref{fig:arboreal_decomposition} contains an example of a digraph $D$, while the right hand side shows an arboreal decomposition for it.
In the illustration of the arboreal decomposition, squares are guards $X_e$ and circles are bags of vertices $W_r$.
For example, consider the edge $e \in E(T)$ with $X_e = \{b,c\}$ from the bag $W_1$ to the bag $W_2$.
Then $\bigcup\{W_r : r \in V(T) \text{ and } r > e\} = V(D) \setminus \{a\}$ and, by item \textbf{(ii)} described above, this set must be $\{b,c\}$-guarded since $X_e = \{b,c\}$.
In other words, there cannot be a walk in $D \setminus \{b,c\}$ starting in $V(D) \setminus \{a\}$ going to $\{a\}$ and then back. %using a vertex of $V(D) \setminus \{a\}$.
This is true in $D$ since every path reaching $\{a\}$ from the remaining of the graph must do so through vertices $b$ or $c$.
The reader is encouraged to verify the same properties for the other guards in the decomposition.
\begin{figure}[ht]
\centering
\begin{subfigure}{.35\textwidth}
\scalebox{.85}{
\begin{tikzpicture}

\foreach \x/\y/\name/\lpos in {
	0/5/a/90,
	-1.5/4/b/90, 1.5/4/c/90}
	\node[blackvertex, label=\lpos:{$\name$},scale=.5] (P-\name) at (\x,1.2*\y) {$\name$};

\foreach \x/\y/\name/\lpos in {
	-2.25/3/d/120, -0.75/3/e/60, 0.75/3/f/120, 2.25/3/g/60}
	\node[blackvertex, label={[label distance= -.1cm]\lpos:{$\name$}},scale=.5] (P-\name) at (\x,1.2*\y) {$\name$};

\foreach \x/\y in {
a/b, a/c, b/c, b/d, b/e, c/b, c/f, c/g, d/b, e/b, f/c, g/c, d/e, f/g}
	\draw[edge, {Latex[length=2mm, width=2mm]}-{Latex[length=2mm, width=2mm]}, line width = 1, shorten >= .1cm, shorten <= .1cm] (P-\x) to (P-\y);

\end{tikzpicture}%

}%
\end{subfigure}\hspace{1cm}
\begin{subfigure}{.35\textwidth}
\scalebox{.75}{
\begin{tikzpicture}%[rectangle/.style={regular polygon,regular polygon sides=4}]
	\foreach \x/\y/\name/\idn/\lbl/\lblpos in {
	0/5/w1/a/{W_1}/180,
	0/2.5/w2/{b,c}/{W_2}/270,
	-2.5/2/w3/{d,e}/{W_3}/90, 2.5/2/w4/{f,g}/{W_4}/90}
	\node[vertex, text width=.75cm, align=center, label, label=\lblpos:{\large$\lbl$}] (P-\name) at (\x,\y) {\idn};

\draw[edge, line width = 1] (P-w1) to  node[midway, rectangle, fill=white,draw] {$b,c$}  (P-w2); 
\draw[edge, line width = 1] (P-w2) to  node[midway, rectangle, fill=white,draw] {$b$}  (P-w3); 
\draw[edge, line width = 1] (P-w2) to  node[midway, rectangle, fill=white,draw] {$c$}  (P-w4); 

\end{tikzpicture}%
}%
\end{subfigure}
\caption{A digraph $D$ and an arboreal decomposition of $D$ of width two. A bidirectional edge is used to represent a pair of edges in both directions.}
\label{fig:arboreal_decomposition}
\end{figure}

We remark that DAGs have directed tree-width zero.
As shown in~\cite{Johnson.Robertson.Seymour.Thomas.01}, it is not hard to see that, if $D$ is a digraph constructed by replacing every edge of an undirected graph $G$ by two edges in opposite directions, then the directed tree-width of $D$ is equal to the tree-width of $G$.
This observation also implies that $\dtw(D)$ is at most the tree-width of its underlying graph.
Further intuitions of the similarities between the undirected and the directed cases are given by Reed~\cite{Reed.99}.
It is worth noting that, in contrast to the undirected case, the class of digraphs of bounded directed tree-width is \emph{not} closed under butterfly contractions~\cite{ADLER2007718}.

The algorithm we present in this section consists of dynamic programming along an arboreal decomposition of the input digraph. Following the notation used by Johnson et al.~\cite{Johnson.Robertson.Seymour.Thomas.01}, we refer to the information we want to compute at every step of the algorithm as an \emph{itinerary}.
We provide a formal definition for an itinerary for \tbdp later.
We recall that a set of vertices $S$ is $w$-guarded if $S$ is $Z$-guarded for some $Z$ with $|Z| \leq w$ (cf. Definition~\ref{def:guarded_sets}).

Johnson et al.~\cite{Johnson.Robertson.Seymour.Thomas.01} provided two conditions that, if satisfied by a given problem, are sufficient to provide an \textsf{XP} algorithm for it in digraphs with bounded directed tree-width.
More precisely, for a digraph $D$ with $\dtw(D) = w$, they ask that there is a real number $\alpha$ depending on $w$ (and possibly some parameters of the problem, if any) and two algorithms satisfying the following conditions.

\begin{mycondition}[Johnson et al.~{\cite{Johnson.Robertson.Seymour.Thomas.01}}]\label{condition_1}
Let $A,B$ be two disjoint subsets of $V(D)$ such that there are no edges in $D$ with head in $A$ and tail in $B$.
Then an itinerary for $A \cup B$ can be computed from an itinerary for $A$ and an itinerary for $B$ in time $\Ocal(n^{\alpha})$.
\end{mycondition}
\begin{mycondition}[Johnson et al.~{\cite{Johnson.Robertson.Seymour.Thomas.01}}]\label{condition_2}
Let $A,B$ be two disjoint subsets of $V(D)$ such that $A$ is $w$-guarded and $|B| \leq w$.
Then an itinerary for $A \cup B$ can be computed from an itinerary for $A$ and an itinerary for $B$ in time $\Ocal(n^{\alpha})$.
\end{mycondition}
Using this notation, the following theorem says how to compute an itinerary for $V(D)$.
\begin{theorem}[Johnson et al.~\cite{Johnson.Robertson.Seymour.Thomas.01}]\label{theorem:meta_theorem_dtw}
%Let $G$ be a digraph with $\dtw(G) = w$.
Provided that Conditions~\ref{condition_1} and~\ref{condition_2} hold, there is an algorithm running in time $\Ocal(n^{\alpha +1})$ that receives as input a digraph $D$ and an arboreal decomposition for $D$ with width at most $w$ and outputs an itinerary for $V(D)$.
\end{theorem}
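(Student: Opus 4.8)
The plan is to establish the theorem by dynamic programming along the arboreal decomposition $(R,\mathcal{X},\mathcal{W})$, sweeping the arborescence $R$ from its leaves towards the root $r_0$. For each node $r\in V(R)$ I would maintain an itinerary for the union $U_r:=\bigcup\{W_{r'} : r'=r \text{ or } r' \text{ is a descendant of } r \text{ in } R\}$ of all bags in the subtree of $R$ rooted at $r$. Since $\mathcal{W}$ is a partition of $V(D)$ into nonempty sets, the root satisfies $U_{r_0}=V(D)$ and $R$ has at most $n$ nodes and at most $n-1$ edges; the itinerary stored at $r_0$ is then exactly the required output, and the bound on $|V(R)|$ is what will keep the number of combination steps linear in $n$.

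The base case handles the leaves of $R$, where $U_r=W_r$ is a single bag; its itinerary is computed directly, noting that $|W_r|\le w+1$ by the width bound. For the inductive step, let $r$ be an internal node with children $c_1,\dots,c_m$ reached by edges $e_1,\dots,e_m$, and suppose itineraries for $U_{c_1},\dots,U_{c_m}$ are already available. I would assemble the itinerary for $U_r$ by absorbing the children one at a time. By property~(2) of an arboreal decomposition, the set $\bigcup\{W_{r'} : r'>e_i\}$ is $X_{e_i}$-guarded, and each guard $X_{e_i}$ has size at most $w$ while each bag has size at most $w+1$, both by the definition of width. Accordingly, the small sets $X_{e_i}$ and $W_{c_i}$ are folded into the partial itinerary through Condition~\ref{condition_2} (whose hypothesis $|B|\le w$ is met after splitting a bag into at most two pieces), whereas the guarded subtree portion attaches to the accumulated set through Condition~\ref{condition_1}.

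The delicate point, which I expect to be the main obstacle, is to certify that at every merge the two sets genuinely meet the hypotheses of the corresponding condition. For Condition~\ref{condition_2} this amounts to bounding the size of the incoming set and checking that the accumulated set is $w$-guarded; for Condition~\ref{condition_1} it amounts to the no-backward-edge requirement, namely that no edge has its head in one part and its tail in the other. Here I would rely squarely on Definition~\ref{def:guarded_sets}: if $S$ is $X_{e_i}$-guarded, then no directed walk in $D-X_{e_i}$ may leave $S$ and return to it through a vertex outside $S\cup X_{e_i}$. Once $X_{e_i}$ has been absorbed into the accumulated set via Condition~\ref{condition_2}, this normality property is exactly what rules out edges in the forbidden direction between the remaining subtree and the accumulated part, thereby validating Condition~\ref{condition_1}. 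Verifying these interface properties carefully, and ensuring that the bags, guards and subtree unions are combined in a pairwise disjoint fashion using that $\mathcal{W}$ is a partition, is the heart of the argument.

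Finally, correctness follows by induction on the subtree structure of $R$, the invariant being that the itinerary stored at $r$ is valid for $U_r$. For the running time, each invocation of Condition~\ref{condition_1} or Condition~\ref{condition_2} costs $\Ocal(n^{\alpha})$, and the total number of invocations is proportional to the number of edges of $R$ plus the number of guard and bag vertices absorbed, which is $\Ocal(n)$. Hence the whole sweep runs in time $\Ocal(n)\cdot\Ocal(n^{\alpha})=\Ocal(n^{\alpha+1})$, as claimed.
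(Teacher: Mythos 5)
The paper offers no proof of this statement: Theorem~\ref{theorem:meta_theorem_dtw} is imported as a black box from Johnson et al.~\cite{Johnson.Robertson.Seymour.Thomas.01}, so your attempt can only be measured against the proof in that reference. Your overall architecture is the right one --- a leaf-to-root sweep of the arborescence maintaining an itinerary for the union $U_r$ of the bags in the subtree at $r$, with $U_{r_0}=V(D)$, each merge costing $\Ocal(n^{\alpha})$ and $\Ocal(n)$ merges in total --- and you correctly isolate the crux: verifying the hypotheses of Conditions~\ref{condition_1} and~\ref{condition_2} at every merge.

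The mechanism you propose for that crux, however, does not work. You claim that, once $X_{e_i}$ has been folded into the accumulated set, the $X_{e_i}$-guardedness of the subtree set $B_i=\bigcup\{W_{r'}:r'>e_i\}$ ``rules out edges in the forbidden direction'' and so validates Condition~\ref{condition_1}. Definition~\ref{def:guarded_sets} only forbids directed walks that leave $B_i$ and \emph{return} to it while avoiding $X_{e_i}$; it places no restriction on individual edges leaving (or entering) $B_i$. In general there are edges from $B_i$ into $W_r$, into $X_e$, and into sibling subtrees $B_j$, so the hypothesis of Condition~\ref{condition_1} --- which is asymmetric: every cross edge must go from $A$ to $B$ and none from $B$ to $A$ --- is not established by your argument, and one must actually prove that the pieces can be ordered so that all cross edges point one way. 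Similarly, Condition~\ref{condition_2} requires the large side to be $w$-guarded; property~(2) of the decomposition certifies this for the full subtree union (guarded by $X_e$ with $|X_e|\le w$), but your intermediate accumulated sets (partial unions of sibling subtrees together with bags and guards) are not obviously $w$-guarded, and you exhibit no guard of size at most $w$ for them. These two verifications are exactly the nontrivial combination lemmas in~\cite{Johnson.Robertson.Seymour.Thomas.01}; without them the induction does not close. The running-time accounting is fine once the merges are legitimate.
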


In~\cite{Johnson.Robertson.Seymour.Thomas.01} an \textsf{XP} algorithm for the \textsc{Directed Disjoint Paths} problem in digraphs of bounded directed tree-width is given as an example of application of the aforementioned tools, and a similar approach is claimed to work for the \textsc{Hamilton Path, Hamilton Path With Prescribed Ends, Even Cycle Through a Specified Vertex} problems, and others.
We follow their ideas to provide an \textsf{XP} algorithm for $(k,w)$-\tbdp, where $w$ is the directed tree-width of the input digraph.
The main idea, formalized by the following definition and lemma, is that the number of weak components in the digraph formed by the union of the paths in a collection $\mathcal{P}$ satisfying the request set is bounded by a function depending on $k$ and $w$ only.
Thus we can guess how the paths in $\mathcal{P}$ cross a set of vertices $A$ that is $w$-guarded and use an arboreal decomposition of the input digraph to propagate this information in a dynamic programming scheme.
We use the following definition.
\begin{definition}
Let $D$ be a digraph and $\mathcal{P}$ be a collection of paths in $D$.
We denote by $D(\mathcal{P})$ the digraph formed by the union of all paths in $\mathcal{P}$.
\end{definition}

\begin{definition}[Limited collections]
Let $R$ be a request set in a digraph $D$ with $|R| = k$ and $\mathcal{P}$ be a collection of paths satisfying $R$.
We say that $\mathcal{P}$ is \emph{$(k,w,S)$-limited}, for some $S \subseteq V(D)$, if $D(\mathcal{P})\subseteq D[S]$ and for every $w$-guarded set $S' \subseteq S$, the digraph induced by $V(D(\mathcal{P})) \cap S'$ has at most $(w+1) \cdot k$ weak components.
\end{definition}
The following lemma is inspired by~\cite[Lemma 4.5]{Johnson.Robertson.Seymour.Thomas.01} and is key to the algorithm.
\begin{lemma}\label{lem:limited_collections}
Let $R$ be a request set of size $k$ in a digraph $D$ and $w$ be an integer.
Then every collection of paths $\mathcal{P}$ satisfying $R$ is $(k, w, S)$-limited for every $S \subseteq V(D)$ containing all paths in $\mathcal{P}$.
\end{lemma}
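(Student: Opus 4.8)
The plan is to verify the two requirements in the definition of a $(k,w,S)$-limited collection separately. The first one, that $D(\mathcal{P}) \subseteq D[S]$, is immediate: since $\mathcal{P}$ consists of paths of $D$ and $S$ contains all of them, we have $V(D(\mathcal{P})) \subseteq S$, and as every edge of $D(\mathcal{P})$ is an edge of $D$ joining two vertices of $S$, it belongs to $D[S]$. The whole work therefore lies in the second requirement: bounding, for an arbitrary $w$-guarded set $S' \subseteq S$, the number of weak components of the digraph induced by $V' := V(D(\mathcal{P})) \cap S'$ by $(w+1)\cdot k$.

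Fix such an $S'$ together with a guard $Z$, $|Z| \le w$, witnessing that $S'$ is $Z$-guarded. First I would analyse each path $P_i$ of $\mathcal{P} = \{P_1,\dots,P_k\}$ in isolation. Deleting from $P_i$ the vertices of $Z \cap V(P_i)$ (at most $w$ of them) splits $P_i$ into at most $w+1$ subpaths, each of which is a directed path lying in $D - Z$. The key step is to show that within one such $Z$-free subpath $Q = q_1 q_2 \cdots q_m$, the set of indices $j$ with $q_j \in S'$ forms a contiguous interval. Indeed, if $q_a, q_b \in S'$ with $a < b$ but some intermediate $q_c \notin S'$, then $q_c \in V(D) \setminus (Z \cup S')$ (as $Q$ avoids $Z$), and the subwalk $q_a q_{a+1} \cdots q_b$ is a directed walk of $D - Z$ with first and last vertices in $S'$ that uses the vertex $q_c$ of $D - (Z \cup S')$, contradicting that $S'$ is $Z$-guarded. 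Hence the vertices of $S'$ appearing on $Q$ induce a single subpath, that is, one weak component, and consequently the restriction $H_i$ of $P_i$ to $S'$ has at most $w+1$ weak components.

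It then remains to assemble these per-path bounds into a global one. Writing $H_i$ for the subgraph of $P_i$ induced by $V(P_i) \cap S'$, one checks that the digraph induced on $V'$ is exactly $\bigcup_{i \in [k]} H_i$ (same vertex set, and an edge of some $P_i$ survives precisely when both of its endpoints lie in $S'$). Since each weak component of each $H_i$ is connected and hence contained in a single weak component of the union, and since every vertex of $V'$ lies in some $H_i$, the map sending a component of some $H_i$ to the weak component of the union containing it is surjective. Therefore the number of weak components of the union is at most $\sum_{i \in [k]} (\text{number of weak components of } H_i) \le (w+1)\cdot k$, as required. If one prefers to read the induced digraph inside $D$ rather than inside $D(\mathcal{P})$, the bound only becomes easier, since the additional edges of $D$ can merge but never split weak components.

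I expect the contiguity argument of the second paragraph to be the crux: it is exactly the point where the $Z$-guarded hypothesis is used, and it is what reduces the intersection of a single path with $S'$ from a potentially scattered vertex set to a controlled number of connected segments. Everything else — the splitting count $w+1$ coming from deleting a guard of size at most $w$, and the union bound that turns the $k$ paths into a factor of $k$ — is routine bookkeeping.
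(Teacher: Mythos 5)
Your proof is correct and takes essentially the same approach as the paper: both bound the number of weak components of each $P_i$ restricted to $S'$ by $w+1$ via the $Z$-guardedness of $S'$ (the paper phrases this as ``every subpath of $P_i$ joining two distinct components must hit $Z$, so $|V(P_i)\cap Z|\ge q_i-1$'', which is exactly your contiguity argument read contrapositively) and then sum over the $k$ paths. Your write-up is somewhat more explicit about the contiguity step and about why the components of the union are covered by the per-path components, but the substance is identical.
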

\begin{proof}
Let $k = |R|$ and $S$ be as in the statement of the lemma and $S'$ be a $w$-guarded subset of $S$.
By the definition of $w$-guarded sets, there is a set $Z \subseteq V(D)$ with $|Z| \leq w$ such that $S'$ is $Z$-guarded.
For $i \in [k]$, let  $\mathcal{Q}_i$ be the collection of paths formed by the subpaths of $P_i$ intersecting~$S'$.
Thus, $D(\mathcal{Q}_i)$ consists of the union of subpaths of $P_i$.
Let $q_i$ be the number of weak components of $D(\mathcal{Q}_i)$.
Since $S'$ is $Z$-guarded, each subpath of $P_i$ linking two distinct weak components of $D(\mathcal{Q}_i)$ must intersect $Z$.
Thus, $|V(P_i) \cap Z| \geq q_i - 1$ and $q_i \leq w+1$ since a vertex of $Z$ can be in all paths of $\mathcal{P}$.
We conclude that
$\sum_{i \in [k]}q_i \leq (w+1) \cdot k$,
as desired.
\end{proof}

We now formally define an itinerary for \tbdp.
From this point forward, we say that a request set $R$ in a digraph $D$ is \emph{contained} in $A$ if every vertex occurring in $R$ is contained in $A$.
\begin{definition}[Itinerary]\label{def:itinerary}
Let $\Gamma$ be an instance of {\sc DEDP} with $\Gamma = (D, R, k, c, s)$, $A \subseteq V(D)$, and $\mathcal{R}_A$ be the set of all request sets on $D$ which are contained in $A$.
For an integer $w$, a \emph{$(\Gamma,w)$-itinerary for $A$} is a function $f_A : \mathcal{R}_A \times \mathbb{N} \rightarrow \{0,1\}$ such that $f_A(R', c') = 1$ if and only if
\begin{enumerate}
	\item[\emph{(i)}] $k' \leq (w+1) \cdot k$, for $k' = |R'|$;
	\item[\emph{(ii)}] $c' \leq c$; and
	\item[\emph{(iii)}] the instance $(D[A], R', k', c', s)$ of {\sc DEDP} is positive.
\end{enumerate}
\end{definition}

With this notation, an instance $(D, R, k, c, s)$ is positive if and only if $f_{V(D)}(R, c') = 1$ for some $c' \leq c$.
We now provide algorithms satisfying Conditions~\ref{condition_1} and~\ref{condition_2} for the given definition of an itinerary for \tbdp.
%We remark that, by Lemma~\ref{lem:limited_collections}, if $f_A(I, c') = 1$ then $I$ is a $(|I|, w, A)$-limited.
%Thus in the following lemmas we need only to consider request sets of size at most $(w+1)\cdot k$ whenever the input digraph has directed tree-width at most $w$.
By Lemma~\ref{lem:limited_collections}, we need to consider only request sets of size at most $(w+1)\cdot k$ whenever the input digraph has directed tree-width at most $w$ in the following lemmas.
We follow the proofs given by Johnson et al.~\cite{Johnson.Robertson.Seymour.Thomas.01}, adapting them to our case.
For every $t \in [n]$, the authors show how to compute a solution containing at most $t$ vertices for a given instance of the \textsc{Directed Disjoint Paths} problem, if one exists, or to decide that no such solution exists.
We drop this demand in our algorithm, and instead include the restriction on the congestion $c$.

\begin{lemma}\label{lemma:condition_1_proof}
Let $\Gamma$ be an instance of {\sc DEDP} with $\Gamma = (D, R, k, c, s)$ and $A,B$ be disjoint subsets of $V(D)$ such that there are no edges in $D$ with head in $A$ and tail in $B$.
Then a $(\Gamma,w)$-itinerary for $A \cup B$ can be computed from itineraries for $A$ and $B$ in time $\Ocal(n^{4(w+1) \cdot k + 3})$.
\end{lemma}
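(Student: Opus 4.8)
The plan is to prove this lemma by establishing an algorithm that combines the two itineraries via an exhaustive, but bounded, search over how a putative solution in $D[A \cup B]$ decomposes across the separation between $A$ and $B$. The crucial structural fact I would use is the hypothesis that there are no edges from $A$ to $B$ (heads in $A$, tails in $B$): this means any path in $D[A \cup B]$, once it leaves $A$ and enters $B$, can never return to $A$. Hence every path $P_i$ in a collection satisfying a request set $I'$ decomposes into an initial segment lying entirely in $A$, followed by a final segment lying entirely in $B$; the ``crossing'' happens at most once per path, along an edge with tail in $A$ and head in $B$. This is exactly the kind of one-directional interface that makes the dynamic programming composable.

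\textbf{The main combinatorial step.}
Given a target request set $I' \in \mathcal{I}_{A \cup B}$ with $k' = |I'| \leq (w+1)\cdot k$ and a congestion bound $c'$, I would argue that $(D[A\cup B], I', k', c', s)$ is positive if and only if $I'$ can be split into an ``$A$-part'' request set $I'_A$ and a ``$B$-part'' request set $I'_B$, together with a budget split $c' = c'_A + c'_B$, such that $f_A(I'_A, c'_A) = 1$ and $f_B(I'_B, c'_B) = 1$. The delicate point is constructing $I'_A$ and $I'_B$: for each original request $(s_i, t_i) \in I'$ I must guess whether its path stays wholly in $A$, stays wholly in $B$, or crosses from $A$ to $B$; in the crossing case I must guess the interface vertex $a_i \in A$ with an out-edge to some $b_i \in B$, so that $(s_i, a_i)$ becomes a request in $I'_A$ and $(b_i, t_i)$ becomes a request in $I'_B$. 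Since a path that crosses does so through a single edge $(a_i, b_i)$, guessing this edge suffices. I would need to verify that the congestion accounts correctly: a vertex of $A$ is used only by $A$-paths and a vertex of $B$ only by $B$-paths, so the number of congested vertices adds, justifying $c' = c'_A + c'_B$. Here is where I use that $f_A$ and $f_B$ are \emph{exact} itineraries recording positivity for all request sets contained in $A$ (resp.\ $B$) up to size $(w+1)\cdot k$ and all congestion budgets; Lemma~\ref{lem:limited_collections} guarantees that splitting a collection of $k'$ paths across the $w$-guarded interface keeps each part's request count within the $(w+1)\cdot k$ bound, so the needed sub-itinerary entries are genuinely present.

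\textbf{Running-time bookkeeping and the main obstacle.}
To compute $f_{A \cup B}(I', c')$ I enumerate all assignments of the $k'$ requests to the three categories ($A$-only, $B$-only, crossing), and for each crossing request all choices of interface edge, and all splits of the budget $c'$. The number of category assignments is $3^{k'} \le 3^{(w+1)k}$, the interface-edge guesses contribute a factor polynomial in $n$ per crossing request (at most $n^2$ each, so at most $n^{2k'}$ overall), and the budget split contributes a factor of $c' + 1 \le n$; each resulting lookup in $f_A, f_B$ is $\Ocal(1)$. Multiplying through and iterating over all $I'$ and $c'$ yields a bound of the stated form $\Ocal(n^{4(w+1)k + 3})$, where the exponent absorbs the interface guesses (contributing roughly $2(w+1)k$), the enumeration over candidate request sets $I'$ in $\mathcal{I}_{A\cup B}$ (each encoded by $O((w+1)k)$ endpoint choices), and the lower-order polynomial factors. \textbf{The main obstacle I anticipate} is the correctness of the congestion accounting under path-splitting: I must be careful that vertices on the interface are handled consistently and that when a single request's path is decomposed, the guessed interface vertices do not introduce spurious shared vertices that inflate or deflate the true congestion count. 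Establishing the clean ``no edges from $A$ to $B$ forces a single crossing'' structure, and showing it makes the congestion of the glued collection exactly the sum of the two parts' congestions, is the technical heart; the rest is routine enumeration and running-time bookkeeping.
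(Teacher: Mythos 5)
Your proposal is correct and follows essentially the same route as the paper's proof: both exploit that the absence of edges from $B$ to $A$ forces each path to cross the interface at most once, both guess for each request whether it lies in $A$, in $B$, or crosses (guessing the crossing edge and splitting the request into an $A$-request and a $B$-request), and both combine the two itineraries over all additive splits of the congestion budget, with the same running-time bookkeeping. The congestion accounting you flag as the technical heart is handled in the paper exactly as you anticipate, via disjointness of $A$ and $B$.
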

\begin{proof}
Let $f_A$ and $f_B$ be $(\Gamma,w)$-itineraries for $A$ and $B$, respectively.
Given a request set $L$ contained in $A \cup B$, with $L = \{(s_1, t_1), \ldots, (s_\ell, t_\ell)\}$ and $\ell \leq (w+1) \cdot k$, and an integer $c' \in [c]$, we show how to correctly define $f_{A \cup B}(L, c')$ by looking at $f_A$, $f_B$, and the edges in $D$ from $A$ to $B$.

 % $L_A$ be the vertices of $L$ which are contained in $A$ and $L_B$ be the vertices of $L$ which are contained in $B$.
If $L$ is contained in $A$ we set $f_{A \cup B}(L, c') = f_A(L, c')$ as there are no edges from $B$ to $A$ in $D$, and set $f_{A \cup B}(L, c') = f_B(L, c')$ if $L$ is contained in $B$.
If there is a pair $(s,t) \in L$ such that $s \in B$ and $t \in A$, we set $f_{A\cup B}(L, c') = 0$. %for every $c' \in [c]$.
Assume now that no such pairs exist in $L$ and that $L$ is not contained in $A$ nor in $B$.
%We now show that the instance $(D[A \cup B], L, \ell, c', s)$ is positive if and only if $f_A(L_A, c_1) = f_B(L_B, c_2) = 1$ for some choice of the request sets $L_A$ and $L_B$, contained in $A$ and $B$ respectively, and integers $c_1$ and $c_2$.

Define $L_A = L_B = \emptyset$.
For $i \in [\ell]$, do the following:
\begin{enumerate}
	\item If $s_i \in A$ and $t_i \in A$, define $s^A_i = s_i$, $t^A_i = t_i$ and include the pair $(s^A_i, t^A_i)$ in $L_A$.
	\item If $s_i \in B$ and $t_i \in B$, define $s^B_i = s_i$, $t^B_i = t_i$ and include the pair $(s^B_i, t^B_i)$ in $L_B$.
	\item If $s_i \in A$ and $t_i \in B$, define $s^A_i = s_i$, $t^B_i = t_i$, choose $t^A_i \in A$ and $s^B_i \in B$ arbitrarily in such way that there is an edge from $t^A_i$ to $s^B_i$ in $D$, include $(s^A_i, t^A_i)$ in $L_A$ and $(s^B_i, t^B_i)$ in $L_B$.
\end{enumerate}
Figure~\ref{fig:construction_condition_1} illustrates this construction.
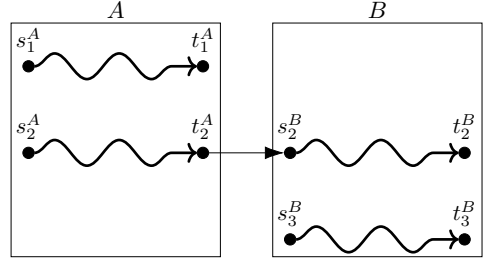
\begin{figure}[htb]
\centering
\scalebox{.85}{
\begin{tikzpicture}[scale=1.35,-{Latex[length=2mm, width=2mm]}, shorten >=.1cm]
	%\foreach \x/\y/\name in {
	%0/2/{S^a_1}, 0/1/b, 0/2/c, 0/3/d,
	%}
\node[blackvertex, label={\Large$s^A_1$},scale=.5] (P-a) at (0,2) {};
\node[blackvertex, label={\Large$t^A_1$},scale=.5] (P-b) at (2,2) {};

\node[blackvertex, label={\Large$s^A_2$},scale=.5] (P-c) at (0,1) {};
\node[blackvertex, label={\Large$t^A_2$},scale=.5] (P-d) at (2,1) {};
\node[blackvertex, label={\Large$s^B_2$},scale=.5] (P-e) at (3.2,1) {};
\node[blackvertex, label={\Large$t^B_2$},scale=.5] (P-f) at (5.2,1) {};

\node[blackvertex, label={\Large$s^B_3$},scale=.5] (P-g) at (3.2,0) {};
\node[blackvertex, label={\Large$t^B_3$},scale=.5] (P-h) at (5.2,0) {};

\foreach \x/\y in {
a/b, c/d, e/f, g/h}
	%\path[draw, wavy, ->]
	\draw [decorate, decoration={snake, segment length=10mm, amplitude=2mm},  line width=1.2] (P-\x) -- (P-\y);

\foreach \x/\y in {
d/e}
	\draw (P-\x) -- (P-\y);

\draw (-0.2,-0.2) rectangle (2.4,2.75);
\draw (2.8,-0.2) rectangle (5.4,2.75);
\node (P-n) at (1,3) {\Large$A$};
\node (P-m) at (4,3) {\Large$B$};
\end{tikzpicture}% 
}%
\caption{Example of the construction from Lemma~\ref{lemma:condition_1_proof}.}
\label{fig:construction_condition_1}
\end{figure}
\noindent Now, for $c' \in [c]$, if $f_A(L_A, c_1) = f_B(L_B, c_2) = 1$ for some $c_1$, $c_2$ with $c_1 + c_2 \leq c'$, then we set $f_{A \cup B}(L, c') = 1$.
Otherwise, we repeat the procedure used to construct $L_A$ and $L_B$ with a different choice for $t^A_i$ and/or $s^B_i$ in the third step. If all possible choices of $L_A, L_B, c_1$, and $c_2$ have been considered this way, we set $f_{A \cup B}(L,c') = 0$.
We now show that this definition of $f_{A \cup B}(L,c')$ is correct.

Consider the instance $(D[A\cup B], L, \ell, c', s)$ of \tbdp, let $k_1 = |L_A|$, and $k_2 = |L_B|$.
If it is positive, then for some choice of $L_A$, $L_B$, $c_1$, and $c_2$, there are collections of paths $\mathcal{P}_A$ and $\mathcal{P}_B$ and integers $c_1$ and $c_2$ such that $\mathcal{P}_A$  and $\mathcal{P}_B$ are solutions for the instances $(D[A], L_A, k_1, c_1, s)$ and $(D[B], L_B, k_2, c_2, s)$ of \tbdp, respectively. Thus, $f_A(L_A, c_1) = f_B(L_B, c_2) = 1$ since $L_A$ and $L_B$ are at most as large as $L$.
Conversely, if the above equation holds for some choice of $L_A, L_B, c_1$, and $c_2$ such that $c_1 + c_2 \leq c'$, we can construct a solution for the instance on $D[A \cup B]$ by considering the union of a solution for $(D[A], L_A, k_1, c_1, s)$ with a solution for $(D[B], L_B, k_2, c_2, s)$, together with edges from targets of $L_A$ to sources of $L_B$ which where considered in  step~\lip{3} described above.
We conclude that the instance $(D[A\cup B], L, \ell, c', s)$ is positive if and only if there are integers $c_1$, $c_2$ and request sets $L_A$, $L_B$ such that $c_1 +  c_2 \leq c$ and
$$f_A(L_A, c_1) = f_B(L_B, c_2) = 1.$$
%If this condition is satisfied, we set $f_{A \cup B}(L, c') = 1$, and $f_{A \cup B}(L, c') = 0$ otherwise.

By definition, to compute a $(\Gamma,w)$-itinerary for $A \cup B$ we need to consider every request set of size at most $(w+1) \cdot k$ contained  $A \cup B$.
Thus there are $n^{2(w+1) \cdot k}$ choices of $L$.
By construction, there are at most $n^{2(w+1\cdot k)}$ choices for $L_A$ and $L_B$ in total since we need to choose only the vertices $t_i^A$ and $s_i^B$ in step~\lip{3} of the construction of those requests sets.
Finally, since $c' \in [c]$ and $c < n$, the bound on the running time follows.
\end{proof}

\begin{lemma}\label{lemma:condition_2}
Let $\Gamma$ be an instance of {\sc DEDP} with $\Gamma = (D, R, k, c, s)$ and $A,B \subseteq V(D)$ such that $A$ is $w$-guarded and $|B| \leq w$.
Then a $(\Gamma,w)$-itinerary for $A \cup B$ can be computed from $(\Gamma,w)$-itineraries for $A$ and $B$ in time $\Ocal(n^{4(w+1) \cdot k + 2})$.
\end{lemma}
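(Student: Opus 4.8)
The plan is to mirror the proof of Lemma~\ref{lemma:condition_1_proof}, but now exploiting that $B$ is \emph{small} instead of that edges go only one way. Let $f_A$ and $f_B$ be the given $(\Gamma,w)$-itineraries and fix a request set $L = \{(s_1,t_1),\ldots,(s_\ell,t_\ell)\}$ contained in $A \cup B$ with $\ell \le (w+1)\cdot k$ together with a budget $c' \in [c]$; we must decide whether $(D[A \cup B], L, \ell, c', s)$ is positive. The structural observation driving everything is that, inside $D[A \cup B]$, the only way a path can leave $A$ is to enter $B$, and since $|B| \le w$ each path of a hypothetical solution $\mathcal{P}$ meets $B$ in at most $w$ vertices. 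Hence every such path decomposes into at most $w+1$ maximal subpaths contained in $A$ (its \emph{$A$-segments}), separated by excursions through $B$. Restricting $\mathcal{P}$ to $A$ therefore yields a collection of paths in $D[A]$ satisfying some request set $L_A$ contained in $A$, which by Lemma~\ref{lem:limited_collections} and the remark following it we may assume has size at most $(w+1)\cdot k$.

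First I would, for each such $L$ and $c'$, enumerate the possible \emph{interaction patterns} of a solution with $B$. Because $|B| \le w$, such a pattern is described by bounded-size data: for each path, the ordered sequence of vertices of $B$ it visits, the subpaths it uses inside $D[B]$, and the edges of $D$ between $A$ and $B$ through which it leaves and re-enters $A$. The number of patterns is bounded by a function of $w$ and $\ell$ only, hence is $\Ocal(1)$ in $n$, and inspecting $D[B]$ (equivalently, the itinerary $f_B$ on the small set $B$) tells us which patterns are realizable. Each pattern determines a request set $L_A$ contained in $A$ — the endpoints of the $A$-segments, stitched together as prescribed — and the exact multiplicity with which each vertex of $B$ is used, hence the number $c_B$ of vertices of $B$ lying on at least $s+1$ paths. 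I would then declare $(L,c')$ feasible precisely when, for some pattern and some integer $c_1 \in [c]$ with $c_1 + c_B \le c'$, we have $f_A(L_A, c_1) = 1$, and set $f_{A\cup B}(L,c')=1$ in that case and $0$ otherwise.

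The correctness argument runs in both directions exactly as in Lemma~\ref{lemma:condition_1_proof}. For the forward direction, a solution on $D[A \cup B]$ restricts to a solution on $D[A]$ for the induced $L_A$ with some $A$-congestion $c_1$, while its $B$-part realizes a pattern with $B$-congestion $c_B$; since $A$ and $B$ are disjoint no vertex is counted twice, so $f_A(L_A,c_1)=1$ and $c_1 + c_B \le c'$. Conversely, given a realizable pattern, a value $c_1$ with $f_A(L_A,c_1)=1$, and $c_1 + c_B \le c'$, one reassembles a solution on $D[A\cup B]$ by splicing the paths witnessed by $f_A$ with the guessed $B$-subpaths along the chosen $A$--$B$ edges; the total number of vertices on at least $s+1$ paths is $c_1 + c_B \le c'$, so the instance is positive. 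For the running time, by Lemma~\ref{lem:limited_collections} there are at most $n^{2(w+1)\cdot k}$ request sets $L$ and, for each, at most $n^{2(w+1)\cdot k}$ candidates $L_A$; the patterns contribute only an $\Ocal(1)$ factor in $n$, while the two free congestion values $c',c_1 \in [c] \subseteq [n]$ contribute an $n^{2}$ factor, yielding the claimed $\Ocal(n^{4(w+1)\cdot k + 2})$. Note that, in contrast to Lemma~\ref{lemma:condition_1_proof}, the congestion of $B$ is read off directly from the pattern rather than looked up with its own parameter, which is exactly why the exponent is $+2$ here instead of $+3$.

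The hard part will be the stitching of the $A$-segments through $B$ together with the congestion bookkeeping. One must check that each guessed interaction pattern is genuinely realizable in $D[A \cup B]$ — that the required $A$--$B$ edges and internal $B$-subpaths exist and are mutually vertex-consistent along each reassembled path — and that the count of congested vertices correctly adds the $A$-contribution returned by $f_A$ to the directly computed $B$-contribution $c_B$ without double counting, while respecting that a single vertex of $B$ may legitimately lie on many paths. This is precisely where the disjointness of $A$ and $B$ and the bound $|B|\le w$ are used, and it is the technically delicate step of an otherwise routine dynamic-programming combination.
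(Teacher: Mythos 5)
Your proposal follows essentially the same route as the paper's proof: decompose each path of a hypothetical solution into maximal $A$-segments separated by excursions through the small set $B$, guess the induced request set $L_A$ together with the bounded-size $B$-side data, query $f_A(L_A,c_1)$, compute the congestion contributed by $B$ directly from the guessed $B$-collection, and accept when $c_1+c_B\le c'$. Your running-time accounting ($n^{2(w+1)\cdot k}$ choices of $L$, $n^{2(w+1)\cdot k}$ choices of $L_A$, a factor $n^2$ for the two congestion budgets, and an $\Ocal(1)$-in-$n$ factor for the $B$-patterns) also matches the paper's.

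The one step you flag but leave unresolved is the reassembly, and your framing of it --- requiring the $A$-paths and the guessed $B$-subpaths to be ``mutually vertex-consistent along each reassembled path'' --- asks for more than can be guaranteed: $f_A(L_A,c_1)=1$ only certifies that \emph{some} collection satisfying $L_A$ with congestion budget $c_1$ exists in $D[A]$, and its paths may share vertices with one another and need not align with the segments of the solution you started from. The paper's resolution is that no such consistency is needed: splicing an arbitrary witnessing collection for $L_A$ with the $B$-subpaths yields, for each $(s_i,t_i)\in L$, a directed \emph{walk} from $s_i$ to $t_i$; every such walk contains a path on a subset of its vertices, and replacing each walk by such a path can only decrease the number of vertices lying on at least $s+1$ members of the collection, so the bound $c_1+c_B\le c'$ survives. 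With that observation inserted your argument closes. Two cosmetic differences: you bound the number of $A$-segments per path by $w+1$ using $|B|\le w$, whereas the paper bounds the total number of segments by $(w+1)\cdot k$ via Lemma~\ref{lem:limited_collections} and the $w$-guardedness of $A$ (either suffices); and the paper treats the all-in-$A$ and all-in-$B$ cases separately before the mixed case, which your pattern enumeration subsumes.
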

\begin{proof}
Let $f_A$ be a $(\Gamma, w)$-itinerary for $A$, $L$ be a request set contained in $A \cup B$ with $L = \{(s_1, t_1), \ldots, (s_\ell, t_\ell)\}$ and $\ell \leq (w+1)\cdot k$, and $\Gamma_{c'}$ be the instance $(D[A \cup B], L, \ell, c', s)$ of \tbdp, for $c' \in [c]$.

For each pair $(s,t) \in L$, a path from $s$ to $t$ in $D[A \cup B]$ in a solution for $\Gamma_{c'}$ may be entirely contained in $A$, entirely contained in $B$, or it may intersect both $A$ and $B$.
We can test if there is a solution for $\Gamma_{c'}$ whose paths are all contained in $A$ by verifying the value of $f_A(L,c')$, and in time $\Ocal(2^{w \cdot \ell})$ we can test if there is a solution for $\Gamma_{c'}$ that is entirely contained in $B$, since $|B| \leq w$.
We now consider the case where all solutions for $\Gamma_{c'}$ contain a path intersecting both $A$ and $B$.

Suppose that $P$ is a path from $s$ to $t$ in a solution $\mathcal{P}$ for $\Gamma_{c'}$.
Let $\mathcal{P}_A$ be the set of subpaths of $P$ which are contained in $A$, with $\mathcal{P}_A = \{P^{A}_1, \ldots, P^{A}_a\}$, and, for $i \in [a]$, let $u_i$ and $v_i$ be the first and last vertices occurring in $P^{A}_i$, respectively.
Furthermore, let $\mathcal{P}_B$ be the collection of subpaths of $P$ contained in $B \cup (\bigcup_{i \in[a]}\{u_i, v_i\})$.
Then $\mathcal{P}_B$ is a collection of disjoint paths satisfying the request set $\{(v_1, u_2), \ldots, (v_{a-1}, u_a)\}$, together with $(s,u_1)$ if $s \in B$ and $(v_a, t)$ if $t \in B$, such that each path of $\mathcal{P}_B$ has its extremities in $B \cup \{u_1, v_1, \ldots, u_a, v_a\}$ and all internal vertices in $B$.
Figure~\ref{fig:construction_condition_2_example} illustrates this case.
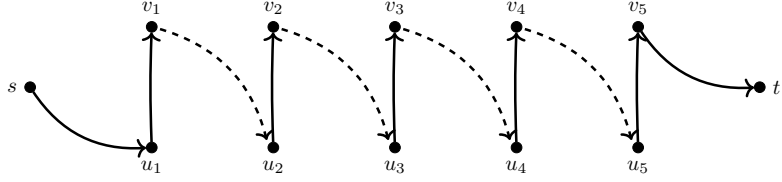
\begin{figure}[t]
\centering
\scalebox{.8}{
\begin{tikzpicture}[scale=2,-{Latex[length=2mm, width=2mm]}, shorten >=.1cm]
	%\foreach \x/\y/\name in {
	%0/2/{S^a_1}, 0/1/b, 0/2/c, 0/3/d,
	%}
\foreach \i in {1,...,5}{
	\node[blackvertex, label=below:{\Large$u_{\i}$},scale=.5] (P-u\i) at (\i,-0.5) {};
	\node[blackvertex, label=above:{\Large$v_{\i}$},scale=.5] (P-v\i) at (\i,0.5) {};
	\draw[line width=1.2] (P-u\i) to [bend left =2] (P-v\i);
}

\edef\j{0}%
%\pgfmathparse{\mya+1.5}
%\edef\mya{\pgfmathresult}
\foreach \i in {1,...,4}{
	\pgfmathparse{\i+1}%
	\edef\j{\pgfmathresult}%
	\node (n) at (P-u\j) {};
	\draw[line width=1.2, dashed, {Latex[length=2mm, width=2mm]}-, shorten <=.1cm]  (n) to [in = 0, out=180]  (P-v\i);
}

\node[blackvertex, label=left:{\Large$s$},scale=.5] (P-s) at (0, 0) {};
\node[blackvertex, label=right:{\Large$t$},scale=.5] (P-t) at (6, 0) {};

\draw[line width=1.2] (P-s) to [bend right =30] (P-u1);
\draw[line width=1.2] (P-v5) to [bend right =30] (P-t);

\end{tikzpicture}%
}%
\caption{Collections $\mathcal{P}_A$ and $\mathcal{P}_B$. A continuous line represents a piece of $P$ contained in $A$ and a dashed line represents a piece of $P$ contained in $B$.}
\label{fig:construction_condition_2_example}
\end{figure}

The number of such collections is a function depending on $a$ and $w$ only and, by Lemma~\ref{lem:limited_collections} and our assumption that $A$ is $w$-guarded, we can assume that $a \leq (w+1)\cdot k$.
We show how we can test whether there is a solution for $\Gamma_{c'}$ using an itinerary for $A$ and, for each $(s,t) \in L$, searching for a collection $\mathcal{P}_B$ as described above.
Intuitively, we want to guess how the paths in a solution for $\Gamma_{c'}$ intersect $A$ and how those pieces can be connected through $B$.

For $i \in [\ell]$, let $L_i = \{(u^i_1, v^i_1), \ldots, (u^i_{\ell_i}, v^i_{\ell_i})\}$ and $L_A = L_1 \cup L_2 \cup \cdots \cup L_\ell$ (keeping each copy of duplicated entries) such that
\begin{enumerate}
	\item $u^i_1 = s_i$ and $v^i_{\ell_i} = t_i$, for $i \in [\ell]$;
	\item all vertices occuring in $L_i$ are in $A$ except possibly $u^i_1$ and $v^i_{\ell_i}$ (which may occur in $A \cup B$); and
	\item $|L_A| \leq (w+1) \cdot k$.
\end{enumerate}

%Now, let $L_A = L_1 \cup L_2 \cup \cdots \cup L_{\ell}$.
%By our choice of the requests $L_i$, we know that $|L_A| \leq (w+1) \cdot k$.
By Lemma~\ref{lem:limited_collections}, we only need to consider request sets of size at most $(w+1)\cdot k$ in $A \cup B$ since every solution for $\Gamma$ has at most $(w+1)\cdot k$ weak components in $A$.
Let $B^+$ be the set formed by the union of $B$ with all vertices occurring in $L_A$ and in
$$L_B = \bigcup_{i \in [\ell]} \{(v^i_j,u^i_{j+1}) \mid j \in [\ell_i - 1]\}.$$
That is, for each pair $(u^i_j,v^i_{j}) \in L_A$ that we want to satisfy in $A$, we want to link this subpath in a (possible) solution for $\Gamma_{c'}$ to the next one through a path in $B^+$ satisfying the pair $(v^i_{j},u^i_{j+1}) \in L_B$.
%Finally, since $|B^+|$ is bounded by a function depending on $k$ and $w$ only, we can assume that a $(\Gamma, w)$-itinerary $f_{B^+}$ for $B^+$ is known.
We claim that there is a solution for $\Gamma_{c'}$ if and only if, for some choice of $L_A$, $c_1$, and $L_B$, we have $f_A(L_A, c_1) = 1$ and a collection of paths $\mathcal{P}_B$ satisfying $L_B$ in $D[B^+]$ such that
\begin{enumerate}
	\item[(a)] every path of $\mathcal{P}_B$ starts and ends in $B^+ - B$ and has all of its internal vertices in $B$; and
	\item[(b)] at most $c - c_1$ vertices of $B$ occur in more than $s$ paths of $\mathcal{P}_B$.
\end{enumerate}

For the necessity, we can choose $L_A$ and $L_B$ as described above in this proof.
For the sufficiency, let $\ell_A = \sum_{i \in [\ell]}\ell_i$ and  $\mathcal{P}_A$ be a solution for the instance
$(D[A], L_A, \ell_A, c_1, s)$ of \tbdp, with $\mathcal{P}_A = \{P_1, \ldots, P_{\ell_A}\}$.
Now, since the paths in this collection are not necessarily disjoint, we are guaranteed to find only a directed walk from $s_i$ to $t_i$ for each pair $(s_i, t_i) \in L$ by linking (through the paths in $B^+$) the endpoints of the paths in the collection satisfying $L_i$, with $i \in [\ell]$.
However, every such directed walk contains a path from $s_i$ to $t_i$ whose set of vertices is contained in the set of vertices of the walk.
Thus by following those directed walks and choosing the paths appropriately, we can construct a solution for $\Gamma_{c'}$, since shortening the walks can only decrease the number of vertices occurring in $s+1$ or more paths of the collection.

The number of collections $\mathcal{P}_B$ for which (a) and (b) hold is $\Ocal(2^{w\cdot\ell})$ and thus depending on $k$ and $w$ only, since $\ell \leq (w+1)\cdot k$. Since $|A| \leq n$, $c \leq n$, and the number of itineraries contained in $A \cup B$ is at most $n^{4(w+1)\cdot k}$, the bound on the running time follows.
\end{proof}

Finally, we obtain the \textsf{XP} algorithm combining Lemmas~\ref{lemma:condition_1_proof} and~\ref{lemma:condition_2} together with Theorem~\ref{theorem:meta_theorem_dtw}.
\begin{theorem}\label{thm:XPdtw}
%Let $G$ be a digraph with $\dtw(G) = w$.
The {\sc DEDP} problem is solvable in time $\Ocal(n^{4(w+1)\cdot k +3})$ in digraphs of directed tree-width at most $w$.
\end{theorem}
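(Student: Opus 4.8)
The plan is to derive the theorem as a direct consequence of the two merge lemmas together with the meta-theorem of Johnson et al., so that the work reduces to checking that the hypotheses of Theorem~\ref{theorem:meta_theorem_dtw} are met and then reading off the running time. First I would observe that Lemma~\ref{lemma:condition_1_proof} is precisely a proof that Condition~\ref{condition_1} holds for the $(\Gamma,w)$-itinerary of Definition~\ref{def:itinerary}, and that Lemma~\ref{lemma:condition_2} establishes Condition~\ref{condition_2}. The only thing to verify is that the bounds from the two lemmas can be wrapped into a single value of $\alpha$ as the framework requires: since the merge in Lemma~\ref{lemma:condition_1_proof} costs $\Ocal(n^{4(w+1)\cdot k+3})$ and the one in Lemma~\ref{lemma:condition_2} costs $\Ocal(n^{4(w+1)\cdot k+2})$, I would take $\alpha$ to be the larger exponent, so that both conditions are simultaneously satisfied with per-merge time $\Ocal(n^{\alpha})$.

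With Conditions~\ref{condition_1} and~\ref{condition_2} in place, I would invoke Theorem~\ref{theorem:meta_theorem_dtw} on $D$ together with an arboreal decomposition of width at most $w$ to compute a $(\Gamma,w)$-itinerary $f_{V(D)}$ for the whole vertex set. Correctness then follows immediately from Definition~\ref{def:itinerary}: the original instance $(D,I,k,c,s)$ is positive if and only if $f_{V(D)}(I,c')=1$ for some $c'\le c$, so after computing the itinerary I would simply scan the $\Ocal(n)$ relevant values of $c'$ to decide. Throughout, Lemma~\ref{lem:limited_collections} is what justifies restricting attention at every bag to request sets of size at most $(w+1)\cdot k$, which is exactly the bound both merge lemmas exploit to limit the number of itineraries that have to be propagated.

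The genuinely delicate content has already been handled inside the lemmas rather than in the theorem itself; in particular Lemma~\ref{lemma:condition_2}, where a path of a solution may repeatedly cross between a $w$-guarded set $A$ and its small guard $B$, and one must argue that guessing the crossing pattern through $B^{+}$ and then shortening the resulting directed walks into genuine paths preserves both the requests and the congestion budget. For the theorem, the point I would be most careful about is the time accounting: multiplying the $\Ocal(n)$ nodes of the decomposition by the per-merge cost $\Ocal(n^{\alpha})$ is what yields the claimed exponent, and I would also make explicit whether the arboreal decomposition is taken as given, as in the statement of Theorem~\ref{theorem:meta_theorem_dtw}, or must first be produced. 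In the latter case one would invoke an (approximate) decomposition algorithm of width $\Ocal(w)$, which affects only the constant in front of the exponent and leaves the algorithm \textsf{XP} in $k$ and $w$, as desired.
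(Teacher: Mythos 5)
Your proposal is correct and follows exactly the route the paper takes: its entire proof of this theorem is the one-line observation that Lemmas~\ref{lemma:condition_1_proof} and~\ref{lemma:condition_2} establish Conditions~\ref{condition_1} and~\ref{condition_2} for the itinerary of Definition~\ref{def:itinerary}, after which Theorem~\ref{theorem:meta_theorem_dtw} and the remark that the instance is positive iff $f_{V(D)}(I,c')=1$ for some $c'\le c$ yield the algorithm. Your extra care about taking $\alpha$ to be the larger of the two merge costs and about whether the arboreal decomposition is given or must be computed only makes explicit what the paper leaves implicit.
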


\subsection{Algorithms for the dual parameterization}\label{section:algorithm:xp_d_s_1}
We now show our algorithmic results for stronger parameterizations of \tbdp including $d$ as a parameter.
The following definition is used in the description of the algorithms of this section.

\begin{definition}
Let $D$ be a digraph, $R$ be a request set with $R = \{(s_1, t_1), \ldots, (s_k, t_k)\}$, and $s$ be an integer.
We say that a set $X \subseteq V(D)$ is \emph{$s$-viable for $R$} if there is a collection of paths $\mathcal{P}$ satisfying $R$ such that each vertex of $X$ occurs in at most $s$ paths of $\mathcal{P}$.
We also say that $\mathcal{P}$ is \emph{certifying $X$}.
%When $s$ is clear from the context, we drop it from the notation.
\end{definition}

Thus an instance $(D, R, k, c, s)$ of \tbdp is positive if and only if $D$ contains an $s$-viable set $X$ with $|X| \geq d$.
In other words, we want to find a set of vertices $X$ of size at least $d$ such that there is a collection of paths $\mathcal{P}$ satisfying $R$ that is ``well-behaved'' inside of $X$; that is, the paths of $\mathcal{P}$ may intersect freely outside of $X$, but each vertex of $X$ must be in at most $s$ paths of $\mathcal{P}$.
When $s=1$, for instance, instead of asking for the paths to intersect only inside a small set of vertices (size at most $c$), we ask for them to be disjoint inside a large set of vertices (size at least $d$).
%Given this definition, an \textsf{XP} algorithm with parameters $d$ and $s$ follows.
Since we now consider $d$ as a parameter instead of $c$, from this point onwards we may refer to instances of \tbdp as $(D, R, k, d, s)$.

The following definition is needed in the next proof.
For two positive integers $a$ and $b$ with $a \geq b$, the \emph{Stirling number of the second kind}~\cite{math.mag.85.4.252}, denoted by $\textsf{Stirling}(a,b)$, counts the number of ways to partition a set of $a$ objects into $b$ non-empty subsets, and is bounded from above by
$\frac{1}{2} \binom{a}{b}\cdot b^{a-b}$.

\begin{theorem}\label{algorithm:xp_d_s_1}
There is an algorithm running in time $\Ocal(n^{d+2} \cdot k^{d\cdot s})$ for the {\sc Disjoint Enough Directed Paths} problem.
\end{theorem}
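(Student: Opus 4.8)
The plan is to give a brute-force algorithm that directly exploits the characterization of positive instances in terms of $s$-viable sets. Recall that $(D,I,k,d,s)$ is positive if and only if $D$ contains an $s$-viable set $X$ with $|X| \geq d$, and that $X$ is $s$-viable precisely when there is a collection $\mathcal{P}$ satisfying $I$ in which every vertex of $X$ lies on at most $s$ of the paths. The key observation is that it suffices to search over all candidate sets $X$ of size exactly $d$ (if a larger $s$-viable set exists, any $d$-subset of it is also $s$-viable), and for each fixed $X$ to decide whether some satisfying collection is ``well-behaved'' inside $X$. There are $\binom{n}{d} = \Ocal(n^d)$ such sets, which accounts for the first factor in the running time.

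First I would fix a candidate set $X \subseteq V(D)$ with $|X|=d$. Given $X$, the constraint is that each vertex of $X$ may be used by at most $s$ of the $k$ paths, while vertices outside $X$ are unconstrained. The natural way to encode this is to make each vertex $v \in X$ into a gadget that can absorb at most $s$ path-passages. Concretely, I would build a digraph $D'$ from $D$ by splitting every $v \in X$ into $s$ internally-disjoint copies (a standard vertex-capacity construction: replace $v$ by an in-vertex and an out-vertex joined by $s$ parallel internal channels, or equivalently a vertex of capacity $s$), so that any collection of paths in $D'$ respects the bound of $s$ passages through each original vertex of $X$, and leaves vertices outside $X$ uncapacitated. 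Deciding whether the $k$ requests can be routed in $D'$ subject to these per-vertex capacities is then the core subproblem.

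Here the $k^{d\cdot s}$ factor should arise from guessing, for each of the $d$ vertices of $X$ and each of its $s$ capacity slots, which of the $k$ requests (if any) uses that slot — there are at most $k$ choices per slot and $d\cdot s$ slots in total, giving $k^{d\cdot s}$ assignments. For each fixed assignment of request indices to the capacity slots inside $X$, the routing problem decouples: once we know which request passes through each copy of each vertex of $X$, verifying the existence of a valid collection reduces to $k$ independent connectivity (reachability) tests, one per request, in an appropriately modified digraph where the forced passages are wired in. Each connectivity test runs in polynomial (indeed linear in $|E(D)|$) time, and the total work per $(X, \text{assignment})$ pair is polynomial, which is absorbed into the $\Ocal$-notation. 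Multiplying the three contributions — $\Ocal(n^d)$ choices of $X$, $k^{d\cdot s}$ assignments, and polynomial per-check cost — yields the claimed bound $\Ocal(n^d \cdot k^{d\cdot s})$.

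The main obstacle I anticipate is getting the capacity-and-assignment bookkeeping exactly right, so that a positive answer to one of the guessed subproblems genuinely corresponds to an $s$-viable set and vice versa. In particular one must be careful that the $k$ reachability tests can be made truly independent after fixing the assignment: paths may share vertices outside $X$ freely, so those vertices impose no conflict, but the forced passages through copies inside $X$ must be consistently threaded, and one has to check that a collection of walks obtained this way can be shortened to genuine paths without violating the $s$-bound inside $X$ (shortening a walk only removes vertices, so it cannot increase any vertex's multiplicity, exactly as in the argument used in Lemma~\ref{lemma:condition_2}). Verifying both directions of this correspondence, while keeping the count of guesses at $k^{d\cdot s}$ rather than something larger, is the delicate part; the running-time analysis itself is then routine.
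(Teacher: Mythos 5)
Your proposal is correct and follows essentially the same route as the paper: enumerate the $\Ocal(n^d)$ candidate sets $X$, split each vertex of $X$ into $s$ copies to reduce to the $s=1$ case, and then enumerate the $k^{d\cdot s}$ ways of assigning copies to requests (the paper phrases this as enumerating partitions of the $d\cdot s$ copies into $k$ classes and testing, for each request $i$, that deleting the copies not assigned to $i$ leaves $s_i$ and $t_i$ connected). The ``decoupling'' you worry about is exactly the paper's observation that a vertex of $X$ assigned to class $i$ may only be used by $P_i$, so the $k$ reachability tests are indeed independent and no walk-shortening is needed.
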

\begin{proof}
Let $D$ be a graph on $n$ vertices and $(D, R, k, d,s)$ be an instance of \tbdp.
Notice that if $X$ is $s$-viable for $R$, then any proper subset of $X$ is $s$-viable for $R$ as well.
Therefore, we can restrict our attention to sets of size exactly $d$.

If $s=0$, it is sufficient to test whether there is a $d$-sized set $X \subseteq V(D)$ such that there is a collection of paths satisfying $R$ in $D - X$, and this can be done in time $\Ocal(n^d \cdot k \cdot n^2)$.

Let now $s=1$, and  $R = \{(s_1, t_1), \ldots, (s_k, t_k)\}$.
We claim that a set $X \subseteq V(D)$ is $1$-viable for $R$ if and only if there is a partition $\mathcal{X}$ of $X$ into sets $X_1, \ldots, X_k$ such that $X \setminus X_i$ is not an $(s_i, t_i)$-separator, for $i \in [k]$.

Let $\mathcal{X}$ be as stated in the claim. For each $i \in [k]$, let $P_i$ be a path from $s_i$ to $t_i$ in $D - (X \setminus X_i)$.
Now, $\{P_1, \ldots, P_k\}$ is a collection satisfying $R$ and no pair of paths in it intersect inside $X$.
Thus $X$ is $1$-viable for $R$ as desired.
For the necessity, since $X$ is $1$-viable for $R$, there is a collection of paths $P_1, \ldots, P_k$ satisfying $R$ such that $V(P_i) \cap V(P_j) \cap X = \emptyset$ for all $i,j \in [k]$ with $i \neq j$.
Thus we choose $\mathcal{X} = \{X_1, \ldots, X_k\}$ with $X_i = V(P_i) \cap X$ for $i \in [k-1]$ and $X_k = X \setminus (X_1, \ldots, X_{k-1})$ and the claim follows.

Assume now that $X \subseteq V(D)$ with $|X| = d$.
By the previous claim, we can check whether $X$ is $1$-viable for $R$ by testing whether $X$ admits a partition into (possibly empty) sets $X_1, \ldots, X_k$ such that $X \setminus X_i$ is not an $(s_i, t_i)$-separator.
Since $\textsf{Stirling}(d,k) = \Ocal(k^d)$, this yields an  algorithm in time $\Ocal(n^{d+2} \cdot k^d)$ for the \tbdp problem when $s = 1$.

For $s \geq 2$, let $X = \{v_1, \ldots, v_d\}$ and construct a graph $D'$ from $D$ by making $s$ copies $v_i^1, \ldots, v_i^s$ of each vertex $v_i \in X$ and adding one edge from each copy to each vertex in the neighborhood of $v_i$ in $D$, respecting orientations.

For $i \in [d]$, let $V_i = \{v_i^1, \ldots, v_i^s\}$ and $X' = \bigcup_{i \in [d]}V_i$.
Now, there is a collection of paths $\mathcal{P}$ satisfying $R$ in $D$ such that each vertex in $X$ is in at most $s$ paths of $\mathcal{P}$ if and only if there is a collection of paths $\mathcal{P'}$ in $D'$ such that no vertex in $X'$ occurs in more than one path of $\mathcal{P}'$.
To test whether a given $X$ is $s$-viable for $R$ with $s\geq 2$, we can just test whether $X'$ is $1$-viable for $R$ in $D'$.
Since $|X'| = d \cdot s$, this yields an algorithm in time $\Ocal(n^{d+2} \cdot k^{d\cdot s})$ for \tbdp.
\end{proof}

We now proceed to show that $(k,d,s)$-\tbdp is \textsf{FPT}, by providing a kernel with at most $d \cdot 2^{k-s} \cdot \binom{k}{s} + 2k$ vertices. We start with some definitions and technical lemmas.
Notice that any vertex in $D$ whose deletion disconnects more than $s$ pairs in the request set $R$ cannot be contained in any set $X$ that is $s$-viable for $R$.
Hence we make use of an operation to eliminate all such vertices from the input digraph while maintaining connectivity.
%We use the following definitions.
We remind the reader that, for a request set $R$, we denote by $S(R)$ the set of source vertices in $R$ and by $T(R)$ the set of target vertices in $R$ (cf. Definition~\ref{definition:requests}).
\begin{definition}[Non-terminal vertices]
Let $(D, R, k, d, s)$ be an instance of {\sc DEDP}. For a digraph $D'$ such that $V(D') \subseteq V(D)$, we define $V^*(D') = V(D') \setminus (S(R) \cup T(R))$.
\end{definition}
That is, $V^*(D)$ is the set of non-terminal (i.e., neither source nor target) vertices of $D$.

\begin{definition}[Congested vertex and \BcolName]\label{defintion:blocking_vertex}
Let $(D, R, k, d, s)$ be an instance of {\sc DEDP}.
For $X \subseteq V^*(D)$, we define $R_X$ as the subset of $R$ that is \emph{blocked} by $X$,
that is, there are no paths from $s$ to $t$ in $D - X$ for every $(s_i,t_i) \in R_X$.
We say that a vertex $v \in V^*(D)$ is an $(R,s)$-\emph{congested vertex} of $D$ if $|R_{\{v\}}| \geq s+1$.
The \emph{\BcolName of $R$} is the collection $\{B_1, \ldots, B_k\}$ where $B_i = \{v \in V^*(D) \mid (s_i, t_i) \in R_{\{v\}}\}$, for $i \in [k]$.
We say that $D$ is \emph{clean for $R$} and that $(D, R, k, d, s)$ is a \emph{clean instance} if there are no congested vertices in $V^*(D)$.
When $R$ and $s$ are clear from the context, we drop them from the notation.
\end{definition}
%We now define the \emph{bypassing} operation, which is used to remove from the graph the vertices that can not be included in any solution for an instance of \tbds, while preserving the connectivity between all pairs of vertices of the graph.
We use the following operation to eliminate congested vertices of $D$ while maintaining connectivity.
It is used, for instance, in~\cite{Chitnis.Hajiaghayi.Marx.11} (as the \emph{torso} operation) and in~\cite{Kratsch.Pilipczuk.Wahlstrom.15}.
\begin{definition}[Bypassing vertices and sets]
Let $D$ be a graph and $v \in V(D)$.
We refer to the following operation as \emph{bypassing $v$}: delete $v$ from $D$ and, for each $u \in N^-(v)$ add one edge from $u$ to each vertex $w \in N^+(v)$.
We denote by $D / v$ the graph generated by bypassing $v$ in $D$.
For a set of vertices $B \subseteq V(D)$, we denote by $D / B$ the graph generated by bypassing, in $D$, all vertices of $B$ in an arbitrary order.
\end{definition}
Figure~\ref{fig:bypassing} illustrates the bypass operation.
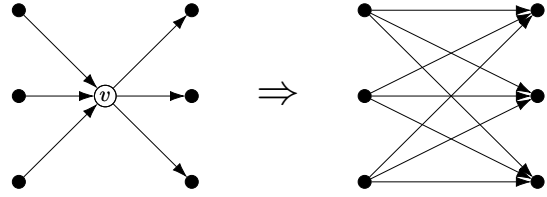
\begin{figure}[t]
\centering
\scalebox{.65}{
\begin{tikzpicture}[scale=1.75,-{Latex[length=2mm, width=2mm]},shorten >= .1cm]
	%\foreach \x/\y/\name in {
	%0/2/{S^a_1}, 0/1/b, 0/2/c, 0/3/d,
	%}
\node[blackvertex, scale=.75] (P-a) at (-1,2) {};
\node[blackvertex, scale=.75] (P-b) at (-1,1) {};
\node[blackvertex, scale=.75] (P-c) at (-1,0) {};

\node[vertex, scale=1.35] (P-d) at (0,1) {$v$};

\node[blackvertex, scale=.75] (P-e) at (1,2) {};
\node[blackvertex, scale=.75] (P-f) at (1,1) {};
\node[blackvertex, scale=.75] (P-g) at (1,0) {};

\foreach \x/\y in {
a/d, b/d, c/d, d/e, d/f, d/g}
	\draw (P-\x) -- (P-\y);

\node (P-n) at (2,1) {\Huge$\Rightarrow$};

\node[blackvertex, scale=.75] (P-a) at (3,2) {};
\node[blackvertex, scale=.75] (P-b) at (3,1) {};
\node[blackvertex, scale=.75] (P-c) at (3,0) {};

%\node[blackvertex, scale=1.35, label={\Large $v$}] (P-d) at (0,1) {};

\node[blackvertex, scale=.75] (P-e) at (5,2) {};
\node[blackvertex, scale=.75] (P-f) at (5,1) {};
\node[blackvertex, scale=.75] (P-g) at (5,0) {};

\foreach \x/\y in {
a/e, a/f, a/g, b/e, b/f, b/g, c/e, c/f, c/g}
	\draw [-{Latex[length=3mm, width=2mm]},] (P-\x) -- (P-\y);

%\foreach \x/\y in {
%d/e}
%	\draw[-{Latex[length=3mm, width=2mm]}] (P-\x) -- (P-\y);

\end{tikzpicture}%
}%
\caption{Bypassing a vertex $v$.}
\label{fig:bypassing}
\end{figure}

We restrict our attention to vertices in $V^*(D)$ in Definition~\ref{defintion:blocking_vertex} because we want to avoid bypassing source or target vertices, and work only with vertices inside $V^*(D)$.
Since $|S(R) \cup T(R)| \leq 2k$, we  show later that this incurs an additive term of $2k$ in the size of the constructed kernel.

In~\cite[Lemma 3.6]{Kratsch.Pilipczuk.Wahlstrom.15} the authors remark that the ending result of bypassing a set of vertices in a digraph does not depend on the order in which those vertices are bypassed.
Furthermore, bypassing a vertex of $D$ cannot generate a new congested vertex: if $u$ is a congested vertex of $D / v$, then $u$ is also a congested vertex of $D$, for any $v \in V(D) \setminus \{u\}$.
Thus any instance $(D, R, k, d, s)$ of \tbdp is equivalent to the instance $(D/v, R, k, d, s)$, if $v$ is a congested vertex of $D$, and arbitrarily bypassing a vertex of $D$ can only make the problem harder.
We formally state those observations below.
\begin{lemma}\label{lemma:bypass_equivalence}
Let $D$ be a digraph, $R$ be a request set with $R = \{(s_1, t_1), \ldots, (s_k, t_k)\}$, $s$ be an integer, $B$ be the set of $(R,s)$-congested vertices of $D$, and $D' = D / B$.
%Let $D' = D / B$ and consider the instance $(D', I, k, d, s)$.
Then, with respect to $R$, $X$ is $s$-viable in $D$ if and only if $X$ is $s$-viable in $D'$.
\end{lemma}
\begin{proof}
Let $X$ be an $s$-viable set for $R$ in $D$. %for $(D, I, k, d, s)$.
If $X \cap B \neq \emptyset$, then at least $s+1$ paths of any collection satisfying $R$ must intersect in $X$, contradicting our choice for $X$, and hence $X \subseteq V(D')$. % and is a solution for $(D', I, k, d, s)$.
Similarly, if $X \subseteq V(D')$ then $X \cap B = \emptyset$ and the sufficiency follows.
\end{proof}

Furthermore, from any solution for an instance resulting from bypassing a set of vertices in $V^*(D)$, we can construct a solution for the original instance in polynomial time by undoing the bypasses.
\begin{remark}\label{remark:bypass_harder}
Let $(D, R, k, d, s)$ be an instance of \tbdp and $Y \subseteq V^*(D)$.
If $\mathcal{P}$ is a solution for $(D/Y, R, k, d, s)$, then $(D, R, k, d, s)$ is positive and a solution can be constructed from $\mathcal{P}$ in polynomial time.
\end{remark}
%\ra{I added a mention of this Remark at the end of the proof of Theorem~\ref{theorem:solving_large_instances_generalization}}

%Through the following ideas, we show that $(k,d,s)$-\tbdp  admits a kernelization algorithm.
% The ideas are mostly the some, although the proof are more technical.
The main ideas of the kernelization algorithm are the following.
Let $(D, R, k, d, s)$ be an instance of \tbdp and $\{B_1, \ldots, B_k\}$ be the \BcolName of $R$.
First, we show that, if $D$ is clean for $R$, there is an $i \in [k]$ such that $|V^*(D) - B_i| \geq |V^*(D)|(k-s)/k$ (Lemma~\ref{lemma:iteration_lemma}).
%This proof is also done in the previous case as part of the proof of Theorem~\ref{theorem:solving_large_instances}.
Then, we show that if $D$ is clean and sufficiently large, and $|R| = s+1$, then the instance is positive and a solution can be found in polynomial time (Lemma~\ref{lemma:algorithm_base_instance}).

Lemma~\ref{lemma:algorithm_base_instance} is used as the base case for our iterative algorithm.
We start with the first instance, say $(D, R, k, d, s)$, and proceed through $k-s+1$ iterations.
At each iteration, we will choose one path from some $s_i$ to its destination $t_i$ such that a large part of the graph remains unused by any of the pairs chosen so far (by Lemma~\ref{lemma:iteration_lemma}) and consider the request set containing only the remaining pairs for the next iteration.
We repeat this procedure until we arrive at an instance where the number of requests is exactly $s+1$, and show that if $n$ is large enough, then we can use Lemma~\ref{lemma:algorithm_base_instance} to output a solution for the last instance.
From this solution, we extract a solution for $(D, R, k, d, s)$ in polynomial time.

\begin{lemma}\label{lemma:iteration_lemma}
Let $(D, R, k, d, s)$ be an instance of {\sc DEDP}, $\{B_1, \ldots, B_k\}$ be the \BcolName of $R$, and $n^* = |V^*(D)|$.
If $D$ is clean, then there is an $i \in [k]$ such that $|V^*(D/B_i)| \geq n^*(k-s)/k$ and there is a path $P$ in $D / B_i$ from $s_i$ to $t_i$ such that $|V^*(P)| \leq |V^*(D/B_i)| / 2$.
\end{lemma}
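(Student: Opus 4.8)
The plan is to establish the two conclusions separately: first choose the index $i$ so that $B_i$ is small (giving the first inequality), and then exhibit a short path for that same $i$.

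For the first conclusion I would observe that bypassing deletes exactly the vertices of $B_i$, so $V^*(D/B_i) = V^*(D)\setminus B_i$ and hence $|V^*(D/B_i)| = n^* - |B_i|$. Thus $|V^*(D/B_i)| \ge n^*(k-s)/k$ is equivalent to $|B_i| \le n^* s/k$, and it suffices to find one index with $|B_i|$ this small. The key point is a double counting: by definition $v \in B_i$ iff $(s_i,t_i) \in I_{\{v\}}$, so for a fixed $v \in V^*(D)$ the number of indices $i$ with $v \in B_i$ equals $|I_{\{v\}}|$. Since $D$ is clean, $|I_{\{v\}}| \le s$ for every $v \in V^*(D)$, so $\sum_{i\in[k]} |B_i| = \sum_{v \in V^*(D)} |I_{\{v\}}| \le n^* s$. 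By averaging, some index $i$ satisfies $|B_i| \le n^* s/k$, and I fix this $i$ for the rest.

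For the second conclusion I would first show that in $D/B_i$ no single non-terminal vertex separates $s_i$ from $t_i$. Let $v \in V^*(D/B_i) = V^*(D)\setminus B_i$. Since $v \notin B_i$, there is an $s_i$-$t_i$ path $Q$ in $D$ avoiding $v$; replacing each maximal run of $B_i$-vertices along $Q$ by the corresponding bypass edge yields a walk in $D/B_i$ from $s_i$ to $t_i$ with vertex set $V(Q)\setminus B_i$, which still avoids $v$, and extracting a path shows $v$ is not an $(s_i,t_i)$-separator in $D/B_i$ (here I use that bypassing preserves reachability between non-bypassed vertices). I would then convert this into two paths sharing no non-terminal vertex by a Menger/max-flow argument on a split graph: give every vertex of $V^*(D/B_i)$ capacity $1$ (splitting it into an in- and out-copy joined by an edge) and every terminal capacity $\infty$. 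A cut of value at most $1$ would be a single non-terminal vertex disconnecting $s_i$ from $t_i$, which we have just excluded, so the maximum flow is at least $2$; a flow decomposition gives paths $P_1,P_2$ in $D/B_i$ with $V^*(P_1)\cap V^*(P_2)=\emptyset$ (and if $(s_i,t_i)\in E(D/B_i)$ or $s_i=t_i$ the trivial path already has $V^*(P)=\emptyset$). As $V^*(P_1)$ and $V^*(P_2)$ are disjoint subsets of $V^*(D/B_i)$, we get $|V^*(P_1)|+|V^*(P_2)| \le |V^*(D/B_i)|$, so the smaller path $P$ satisfies $|V^*(P)| \le |V^*(D/B_i)|/2$.

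The main obstacle is this last step. The Menger statement in the preliminaries counts \emph{all} internal vertices, but a terminal of another request may itself be an $(s_i,t_i)$-separator, so there need not exist two internally vertex-disjoint paths. What the $V^*$-bound actually requires is disjointness only on non-terminal vertices, which is exactly what the capacitated split-graph formulation (with terminals made uncuttable) delivers; getting this distinction right is the crux of the argument. A secondary technical point is justifying the ``no single separating non-terminal vertex'' claim, i.e.\ that bypassing $B_i$ preserves the existence of $v$-avoiding $s_i$-$t_i$ paths, which is where the reachability-preservation property of the bypass operation is invoked.
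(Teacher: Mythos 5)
Your proof is correct, and its skeleton matches the paper's. The first half is the identical double-counting/averaging argument: the paper phrases it via an auxiliary bipartite graph $H$, but the inequality $\sum_{i\in[k]}|B_i| = \sum_{v\in V^*(D)}|I_{\{v\}}| \le n^*\cdot s$ and the conclusion $|B_i|\le n^*s/k$ for some $i$ are exactly yours. The second half likewise reduces, in both proofs, to showing that no single vertex of $V^*(D/B_i)$ is an $(s_i,t_i)$-separator in $D/B_i$ (because such a vertex would already separate the pair in $D$ and hence lie in $B_i$), then producing two $s_i$-to-$t_i$ paths that are disjoint on non-terminals and keeping the shorter one. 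The one genuine divergence is how you each neutralize the terminals of the other requests, which, as you correctly identify, may be legitimate cut vertices and so obstruct a naive application of the Menger statement from the preliminaries. You give them infinite capacity in a vertex-split flow network; the paper instead bypasses $X = (S(I)\cup T(I))\setminus\{s_i,t_i\}$ and applies plain Menger in $(D/B_i)/X$, which suffices because lifting a path back from $(D/B_i)/X$ to $D/B_i$ only reinserts terminals and therefore leaves $V^*(P)$ unchanged. The two devices are interchangeable; your capacitated formulation handles the degenerate cases (a unique path, or $(s_i,t_i)$ being an edge) somewhat more transparently than the paper's brief case analysis, at the cost of setting up the split graph explicitly.
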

\begin{proof}
First, notice that
$$\sum_{v \in V^*(D)}|R_v| = \sum_{v \in V^*(D)}|\{B_i \mid  v \in B_i\}| = \sum_{i \in [k]}|B_i|.$$
Now, if $|B_i| > n^* \cdot s/k$ for every $i \in [k]$, then there must be a vertex in $v$ such that $|R_v| \geq s+1$, as in this case $\sum_{i \in [k]}|B_i|
 > n^*\cdot s$, contradicting our assumption that $D$ is clean.
We conclude that there is an $i \in [k]$ such that $|B_i| \leq n^* \cdot s/k$ and thus $V^*(D/B_i) = n^* - |B_i| \geq n^*(k-s)/k$, as desired.

The result trivially follows if there is a path  $P$ from $s_i$ to $t_i$ in $D / B_i$ with $V^*(P) = \emptyset$.
Thus we can assume that every path from $s_i$ to $t_i$ in $D/B_i$ intersects $V^*(D/B_i)$ (see Figure~\ref{fig:iteration_lemma_2}).
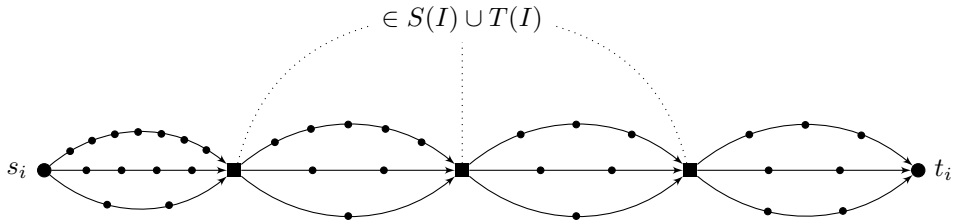
\begin{figure}[h!tb]
\centering
\begin{tikzpicture}[scale=1]
	%\foreach \x/\y/\name in {
	%0/0/s, -1/-1/u, 0/-1/e, 1/-1/v}
	%\node[vertex,scale=1.5] (P-\name) at (2*\x,2*\y) {$\name$};

\foreach \i in {0,...,2} {
	\node [rectangle, fill=black,scale=.75] (P-v\i) at (2.5+3*\i,0) {};
}

%\node (P-n) at (5.5,2) {\textbf{$\in S(I) \cup T(I)$}};
\node (P-n) at ($(P-v1) + (0,1.5)$) {\textbf{$\in S(R) \cup T(R)$}};
\draw [dotted] (P-v0) to [out = 90, in = 180] (P-n);
\draw [dotted] (P-v1) to  (P-n);
\draw [dotted] (P-v2) to [out = 90, in = 0] (P-n);

\node[scale=.5,label=left:$s_i$, blackvertex] (P-s) at (0,0) {};
\node[scale=.5,label=right:$t_i$, blackvertex] (P-t) at (11.5,0) {};

\draw[edge] (P-s) to [bend left = 40]
node [scale=.25,blackvertex,pos=1/8] {}
node [scale=.25,blackvertex,pos=2/8] {}
node [scale=.25,blackvertex,pos=3/8] {}
node [scale=.25,blackvertex,pos=4/8] {}
node [scale=.25,blackvertex,pos=5/8] {}
node [scale=.25,blackvertex,pos=6/8] {}
node [scale=.25,blackvertex,pos=7/8] {}
(P-v0);

\draw[edge] (P-s) to
node [scale=.25,blackvertex,pos=1/5] {}
node [scale=.25,blackvertex,pos=2/5] {}
node [scale=.25,blackvertex,pos=3/5] {}
node [scale=.25,blackvertex,pos=4/5] {}
(P-v0);

\draw[edge] (P-s) to [bend right = 40]
node [scale=.25,blackvertex,pos=1/3] {}
node [scale=.25,blackvertex,pos=2/3] {}
(P-v0);

\draw[edge] (P-v0) to [bend left = 40]
node [scale=.25,blackvertex,pos=1/6] {}
node [scale=.25,blackvertex,pos=2/6] {}
node [scale=.25,blackvertex,pos=3/6] {}
node [scale=.25,blackvertex,pos=4/6] {}
node [scale=.25,blackvertex,pos=5/6] {}
(P-v1);

\draw[edge] (P-v0) to
node [scale=.25,blackvertex,pos=1/3] {}
node [scale=.25,blackvertex,pos=2/3] {}
(P-v1);

\draw[edge] (P-v0) to [bend right = 40]
node [scale=.25,blackvertex,pos=1/2] {}
(P-v1);

\draw[edge] (P-v1) to [bend left = 40]
node [scale=.25,blackvertex,pos=1/4] {}
node [scale=.25,blackvertex,pos=2/4] {}
node [scale=.25,blackvertex,pos=3/4] {}
(P-v2);

\draw[edge] (P-v1) to
node [scale=.25,blackvertex,pos=1/3] {}
node [scale=.25,blackvertex,pos=2/3] {}
(P-v2);

\draw[edge] (P-v1) to [bend right = 40]
node [scale=.25,blackvertex,pos=1/2] {}
(P-v2);

\draw[edge] (P-v2) to [bend left = 40]
node [scale=.25,blackvertex,pos=1/4] {}
node [scale=.25,blackvertex,pos=2/4] {}
node [scale=.25,blackvertex,pos=3/4] {}
(P-t);

\draw[edge] (P-v2) to
node [scale=.25,blackvertex,pos=1/3] {}
node [scale=.25,blackvertex,pos=2/3] {}
(P-t);

\draw[edge] (P-v2) to [bend right = 40]
node [scale=.25,blackvertex,pos=1/3] {}
node [scale=.25,blackvertex,pos=2/3] {}
(P-t);
\end{tikzpicture}%%
\caption{Three paths from $s_i$ to $t_i$ in $D/B_i$. Square vertices are used to identify vertices in $S(R) \cup T(R)$, which may not be bypassed.}
\label{fig:iteration_lemma_2}
\end{figure}
%If any of those is disjoint from $V^*(D/B_i)$, the result trivially follows.
%Thus we can assume that both intersect $V^*(D/B_i)$.
Let $X = (S(R) \cup T(R)) \setminus \{s_i, t_i\}$.
By Menger's Theorem and since no vertex in $V^*(D/B_i)$ intersects every path from $s_i$ to $t_i$, there are two internally disjoint paths $P_1$ and $P_2$ from $s_i$ to $t_i$ in $(D/B_i)/X$.
Without loss of generality, assume that $P_1$ is the shortest of those two paths, breaking ties arbitrarily.
Then $|V^*(P_1)| \leq |V^*(D/B_i)| /2$ since $P_1$ and $P_2$ are disjoint, and the result follows.
\end{proof}

Figure~\ref{fig:iteration_lemma_1} illustrates the procedure described in Lemma~\ref{lemma:iteration_lemma}.
We find a set $B_i$ containing at most $n^* \cdot s/k$ vertices, and bypass all of its vertices in any order.
Then we argue that a shortest path from $s_i$ to $t_i$ in $D/B_i$ avoids a large set of vertices in $D$.
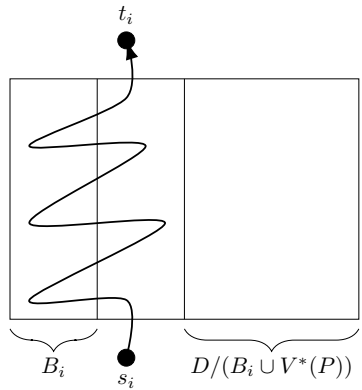
\begin{figure}[h!tb]
\centering
\scalebox{1}{
\begin{tikzpicture}[scale=1.1]
\node[blackvertex,label=below:{\Large $s_i$},scale=.4] (P-c) at (0.6,0.8) {};
\node[blackvertex, label={\Large $t_i$},scale=.4] (P-d) at (1,3.1) {};
%\node (P-entry) at (0.6, 0.4) {};
\node (P-exit) at (0.6, 2.5) {};
%\node[scale=.75] (P-a) at (-0.4,0.4) {};
\node[scale=.75] (P-b) at (1,1.4) {};

%\draw[thick, -{Latex[length=2mm, width=1.5mm]}, shorten >= .15cm] plot[smooth] coordinates {(P-c) (P-entry) (P-a) (P-b) (-0.4,1.2) (0.8,2) (-0.4,2) (P-exit) (P-d)};
\draw[thick, -{Latex[length=2mm, width=1.5mm]}, shorten >= .15cm] plot[smooth] coordinates {(P-c)  (P-b) (-0.4,1.2) (0.8,2) (-0.4,2) (P-exit) (P-d)};

\draw (0.3, 2.7) -- (0.3,1);
\draw (1.2, 2.7) -- (1.2,1);
\draw [decorate,decoration={mirror,brace,amplitude=10pt}]
(1.2,0.9) -- (3,0.9) node [black,midway,yshift=-0.6cm] {$D/(B_i \cup V^*(P))$};
\draw (-0.6,1) rectangle (3,2.7);
\draw [decorate,decoration={mirror,brace,amplitude=7pt}]
(-0.6,0.9) -- (0.3,0.9) node [black,midway,yshift=-0.6cm] {$B_i$};
\end{tikzpicture} 
}%
\caption{A path $P$ from $s_i$ to $t_i$ avoiding a large part of $D$.}
\label{fig:iteration_lemma_1}
\end{figure}

\begin{lemma}\label{lemma:algorithm_base_instance}
Let $(D, R, k, d, s)$ be an instance of {\sc DEDP}, $m = |E(D)|$, and $n^* = V^*(D)$.
If $D$ is clean, $n^* \geq 2d(s+1)$, and $k = s+1$, then $(D, R, k, d, s)$ is positive and a solution can be found in time $\Ocal(k \cdot n(n+m))$.
\end{lemma}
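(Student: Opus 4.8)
The plan is to exploit the fact that, when $k=s+1$, a non-terminal vertex lies in at most $s$ of the $s+1$ paths of a collection if and only if it is avoided by at least one of them. Recalling that the instance is positive precisely when $D$ contains an $s$-viable set of size at least $d$, it therefore suffices to exhibit a collection $\{P_1,\dots,P_{s+1}\}$ satisfying $I$ in which a single path, say $P_i$, misses at least $d$ non-terminal vertices, that is, $|V^*(P_i)| \le n^*-d$. Indeed, setting $X = V^*(D)\setminus V^*(P_i)$ then gives $|X|\ge d$, and every vertex of $X$ is avoided by $P_i$ and hence occurs in at most $s$ paths, so $X$ is $s$-viable and is certified by the collection. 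The remaining paths $P_j$ with $j\ne i$ may be chosen arbitrarily, since by our standing assumption every request is connected in $D$.

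First I would invoke Lemma~\ref{lemma:iteration_lemma}, which (using $k=s+1$, so that $k-s=1$) yields an index $i$ with $|B_i| = n^*-|V^*(D/B_i)| \le n^*\,s/(s+1)$, together with a path $P$ from $s_i$ to $t_i$ in $D/B_i$ satisfying $|V^*(P)| \le |V^*(D/B_i)|/2 = (n^*-|B_i|)/2$. The key step is to pull $P$ back to an actual path $\widehat P$ of $D$: unfolding every bypassed edge of $P$ produces an $(s_i,t_i)$-walk whose internal vertices are vertices of $P$ together with vertices of $B_i$, so extracting a simple path $\widehat P$ from it gives $V^*(\widehat P)\subseteq V^*(P)\cup B_i$ and hence
$$|V^*(\widehat P)| \le \frac{n^*-|B_i|}{2} + |B_i| = \frac{n^*}{2} + \frac{|B_i|}{2} \le \frac{n^*}{2}+\frac{n^*\,s}{2(s+1)}.$$
A direct computation shows that this last quantity is at most $n^*-d$ whenever $n^*\ge 2d(s+1)$, which is exactly our hypothesis; so $P_i := \widehat P$ is the desired short path and, by the reduction of the first paragraph, the instance is positive.

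The main obstacle is the control of $|V^*(\widehat P)|$ in the unfolding step: the vertices of $B_i$ reinserted along $\widehat P$ could a priori inflate the number of non-terminal vertices used, and the whole argument hinges on the two bounds of Lemma~\ref{lemma:iteration_lemma} combining to the threshold $n^*\ge 2d(s+1)$ with no slack. I would therefore present the arithmetic of the previous display carefully, noting in particular that the bypassing of the other terminals performed inside the proof of Lemma~\ref{lemma:iteration_lemma} only reinserts vertices of $S(I)\cup T(I)$, which do not belong to $V^*(D)$ and hence do not affect the count. For the running time, computing the sets $B_i$ amounts, for each of the $k$ requests, to $n$ reachability tests obtained by deleting one vertex at a time, costing $\Ocal(n(n+m))$ per request; finding the two internally disjoint paths of Lemma~\ref{lemma:iteration_lemma}, unfolding $\widehat P$, and routing each remaining request by a single search are all dominated by this, yielding the claimed $\Ocal(k\cdot n(n+m))$ bound.
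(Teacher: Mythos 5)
Your proposal is correct and follows essentially the same route as the paper's proof: both invoke Lemma~\ref{lemma:iteration_lemma} with $k=s+1$ to obtain an index $i$ and a short path, and both take as the $s$-viable set the non-terminal vertices outside $B_i\cup V^*(P)$, which the lifted path avoids and which has size at least $n^*/(2(s+1))\ge d$. Your explicit unfolding of the bypassed path and the accompanying arithmetic are a slightly more careful rendering of the paper's ``reversing the bypasses'' step, but the argument is the same.
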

\begin{proof}
Let $\{B_1, \ldots, B_k\}$ be the \BcolName of $R$ and $D'_i = D / B_i$, for $i \in [k]$.
By Lemma~\ref{lemma:iteration_lemma}, there is an $i \in [k]$ such that $|V^*(D'_i)| \geq n^*/(s+1)$ and a path $P$ from $s_i$ to $t_i$ such that $V^*(P) \leq |V^*(D'_i)|/2$.
Let $D_i = D'_i / V^*(P)$.
Now,
$$|V^*(D_i)| \ \geq\  \frac{|V^*(D'_i)|}{2}\ \geq \ \frac{n^*}{2(s+1)}$$
and since $|R \setminus \{(s_i, t_i)\}| = s$, we are free to choose arbitrarily any collection of paths satisfying $R \setminus \{(s_i, t_i)\}$ in $D_i$.
Reversing the bypasses done in $D$, this collection together with $P_i$ yields a collection of paths satisfying $R$ in $D$ such that all vertices in $V^*(D_i)$ are contained in at most $s$ of those paths. Since $n^* \geq 2d(s+1)$ by hypothesis, we have that $|V^*(D_i)| \geq d$ as required. We can generate the sets $B_i$ in time $\Ocal(n(n+m))$ by deleting a vertex of $D$ and testing for connectivity between $s_i$ and $t_i$.
Thus a solution can be found in time $\Ocal(k\cdot n (n+m))$, as desired.
\end{proof}
%Notice that the last two lemmas hold for $s \geq 0$.
We are now ready to show the main ingredient of the algorithm: we provide a polynomial-time algorithm to solve large clean instances of the \tbd~problem.

\begin{theorem}\label{theorem:solving_large_instances_generalization}
Let $(D, R, k, d, s)$ be a clean instance of {\sc DEDP} with
$|V^*(D)| = n^* \geq d \cdot 2^{k-s} \cdot \binom{k}{s}$.
Then $(D, R, k, d, s)$ is positive and a solution can be found in time $\Ocal(k\cdot n^2(n+m))$.
\end{theorem}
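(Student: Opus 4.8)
The plan is to reduce the number of requests from $k$ down to $s+1$ by an iterative procedure, fixing one path at each step, and then to invoke the base case of Lemma~\ref{lemma:algorithm_base_instance}. Concretely, I would set $(D_1,I_1)=(D,I)$ and, at iteration $j$ for $j=1,\dots,k-s-1$, write $k_j=|I_j|=k-j+1$ and apply Lemma~\ref{lemma:iteration_lemma} to the clean instance $(D_j,I_j,k_j,d,s)$. This produces an index $i_j$ with $|V^*(D_j/B_{i_j})|\ge |V^*(D_j)|\cdot (k_j-s)/k_j$ together with a path $P_j$ from $s_{i_j}$ to $t_{i_j}$ in $D_j/B_{i_j}$ satisfying $|V^*(P_j)|\le |V^*(D_j/B_{i_j})|/2$. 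I would record $P_j$, set $D_{j+1}=(D_j/B_{i_j})/V^*(P_j)$, and continue with $I_{j+1}=I_j\setminus\{(s_{i_j},t_{i_j})\}$. Since bypassing a vertex creates no new blocking vertex and deleting a request only decreases each $|I_{\{v\}}|$, every instance $(D_{j+1},I_{j+1},\dots)$ is again clean, so the iteration is well defined; moreover bypassing preserves reachability between all surviving terminals, so each remaining request stays connected.

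The crux is to check that the final instance $(D_{k-s},I_{k-s},s+1,d,s)$ satisfies the hypothesis $|V^*|\ge 2d(s+1)$ of Lemma~\ref{lemma:algorithm_base_instance}. Writing $n_j^{*}=|V^*(D_j)|$, the two estimates above give $n_{j+1}^{*}\ge \tfrac12\cdot\tfrac{k_j-s}{k_j}\,n_j^{*}$, and therefore
\[
n_{k-s}^{*}\;\ge\;n^{*}\prod_{j=1}^{k-s-1}\Bigl(\tfrac{1}{2}\cdot\tfrac{k_j-s}{k_j}\Bigr)
\;=\;\frac{n^{*}}{2^{\,k-s-1}}\prod_{m=s+2}^{k}\frac{m-s}{m}
\;=\;\frac{n^{*}}{2^{\,k-s-1}}\cdot\frac{(s+1)!\,(k-s)!}{k!}
\;=\;\frac{(s+1)\,n^{*}}{2^{\,k-s-1}\binom{k}{s}}.
\]
Plugging in the hypothesis $n^{*}\ge d\cdot 2^{k-s}\binom{k}{s}$ yields exactly $n_{k-s}^{*}\ge 2d(s+1)$, so Lemma~\ref{lemma:algorithm_base_instance} applies and certifies a set $X\subseteq V^*(D_{k-s})$ with $|X|\ge d$ together with a collection of $s+1$ paths satisfying $I_{k-s}$ in which every vertex of $X$ lies on at most $s$ paths. (When $k=s+1$ the product is empty and the base case is invoked directly.)

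It then remains to lift the solution back to $D$ and to argue that the \emph{same} set $X$ certifies positivity of the original instance. I would proceed by reverse induction, showing that if $X$ is $s$-viable for $I_{j+1}$ in $D_{j+1}$ then $X$ is $s$-viable for $I_j$ in $D_j$: first reverse the bypass of $V^*(P_j)$, turning each certifying path into a walk (shortened to a path, which by Remark~\ref{remark:bypass_harder} cannot increase congestion), then adjoin the recorded path $P_j$ for request $i_j$, and finally reverse the bypass of $B_{i_j}$. The key point is that reversing a bypass only inserts bypassed vertices into a walk, and all bypassed vertices, as well as the non-terminal vertices $V^*(P_j)$ of every recorded path, are disjoint from $X\subseteq V^*(D_{k-s})$. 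Hence no recorded path ever meets $X$, while each vertex of $X$ lies on at most $s$ of the base-case paths; so across the full collection of $k$ paths every vertex of $X$ still lies on at most $s$ of them, and $X$ is $s$-viable for $I$ in $D$ with $|X|\ge d$.

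I expect the main obstacle to be precisely this last bookkeeping: one must verify that the vertices ``consumed'' while fixing $P_1,\dots,P_{k-s-1}$ are exactly the bypassed vertices, so that they are guaranteed to be disjoint from the certifying set produced only at the very end, and that lifting walks to paths preserves the ``at most $s$ paths'' guarantee on $X$ rather than silently introducing extra congestion. For the running time, each of the $\Ocal(k)$ iterations computes the sets $B_i$ by $\Ocal(n)$ vertex deletions followed by connectivity tests, together with one Menger-type computation of two internally vertex-disjoint paths, all of which are polynomial; combined with the $\Ocal(k\cdot n(n+m))$ cost of the base case, this yields the claimed bound of $\Ocal(k\cdot n^2(n+m))$.
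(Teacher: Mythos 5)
Your proposal is correct and follows essentially the same route as the paper's proof: the same iteration of Lemma~\ref{lemma:iteration_lemma} through $k-s-1$ rounds of bypassing $B_{i}$ and $V^*(P_{i})$, the same telescoping product $\frac{(s+1)}{2^{k-s-1}\binom{k}{s}}$ showing the final instance meets the $2d(s+1)$ threshold of Lemma~\ref{lemma:algorithm_base_instance}, and the same lifting argument based on the fact that all bypassed vertices are disjoint from the certifying set $X$. The bookkeeping concern you flag at the end is exactly the point the paper also relies on, and your resolution of it is the intended one.
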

\begin{proof}
Let $\mathcal{B}_0 = \{B_1, \ldots, B_k\}$ be the \BcolName of $R$.
%$$B_i = \{v \in V(D) \mid \text{there is no path from } s_i \text{ to } t_i \text{ in } D - \{v\}\}$$
%and $\mathcal{B}_0 = \{B_1, B_2, \ldots, B_k\}$.
We consider $\mathcal{B}_0$ to be sorted in non-decreasing order by the size of its elements and, by rearranging $R$ if needed, we assume that this order agrees with $R$.
For $i \in [k-(s+1)]$, we construct a sequence of sets $\{D_i, \mathcal{B}_i, \mathcal{P}_i\}$ where $n^*_i = |V^*(D_i)|$ and

\begin{enumerate}
	\item[(i)] $\mathcal{B}_{i} = \{B_{i+1}, \ldots, B_{k}\}$;
	\item[(ii)] $\mathcal{P}_{i}$ is a collection of paths $\{P_1, P_2, \ldots, P_i\}$ such that $P_j$ is a path from $s_j$ to $t_j$ in $D_j$, for $j \in [i]$; and
	\item[(iii)] $n^*_{i-1}$ is large enough to guarantee that we can find a path from $s_{i}$ to $t_{i}$ avoiding a large part of $D_{i-1}$. Formally, we want that
$$n^*_{i}\ \geq\ n^*_0  \cdot \frac{(k-s)(k-s-1)\cdots(k-s-i+1)}{2^i \cdot k (k-1)\cdots(k-i+1)}.$$
\end{enumerate}

%\ig{CAREFUL WITH ROMAN ENUMERATION ITEMS!!!}

We begin with $D_0 = D$, $n^*_0 = n^*$, and $\mathcal{P}_0 = \emptyset$.
Let $D'_1 = D_0/B_1$. By applying Lemma~\ref{lemma:iteration_lemma} with input $(D_0, R, k, d, s)$, we conclude that $|V^*(D'_1)| \geq n^*(k-s)/k$ and there is a path $P_1$ from $s_1$ to $t_1$ in $D'_1$ with $|V^*(P_1)| \leq |V^*(D'_1)|/2$.
Let $D_1 = D'_1 / V^*(P_1)$ and $\mathcal{P}_1 = \{P_1\}$.
Now,
$$n^*_1\ \geq\ \frac{|V^*(D'_1)|}{2} \ \geq  \ \frac{n^*_0(k-s)}{2k}$$
and conditions \lip{(i)}, \lip{(ii)}, and \lip{(iii)} above hold for $(D_1, \mathcal{B}_1, \mathcal{P}_1)$. Assume that $i-1$ triples have been chosen in this way.

%If $|B_1| \geq (n_0/k) + 1$, then there are $B, B' \in \mathcal{B}$ such that $B \cap B' \neq \emptyset$. This contradicts our assumption that $G$ is clean and, therefore, $|B_1| \leq n_0/k$. Let $G' = G_0 / B_0$.

%Now, there are two disjoint paths $P_1^1$ and $P_1^2$ from from $s_1$ to $t_1$ in $G'$.
%Choose $P_1$ to be the shortest of those paths (deciding draws arbitrarily), define $G_{i} = G' / V(P_i)$, $n_{i} = |V(G_{i})|$, and $\mathcal{P}_{1} = \{P_1\}$. Since $|V(P_1)| \leq |V(G')|/2$, we have that
%\begin{equation}\label{eq:n0_function}
%n_{1} = \frac{1}{2}\left(n_0 - \frac{n_0}{k}\right) = n_0 \frac{k-1}{2k}
%\end{equation}
%and thus conditions (i), (ii) and (iii) hold for $\{G_{1}, \mathcal{B}_{1}, \mathcal{P}_{1}\}$. Assume now that $i-1$ paths have been chose this way.

As before, we assume that $\mathcal{B}_{i-1}$ is sorted in non-increasing order by the size of its elements, and that this order agrees with $R \setminus R_{i-1}$.
Furthermore, as $D_0$ is clean, so is $D_{i-1}$.

Let $D'_i = D_{i-1} / B_i$. Applying Lemma~\ref{lemma:iteration_lemma} with input $(D_{i-1}, R \setminus R_{i-1}, k - i + 1, d, s)$, we conclude that $|V^*(D'_i)| \geq n^*_i(k  - i + 1 - s)/(k-i+1)$ and there is a path $P_i$ from $s_i$ to $t_i$ in $D'_i$ with $|V^*(P_i)| \leq |V^*(D'_i)|/2$.
Let $\mathcal{P}_i = \mathcal{P}_{i-1} \cup \{P_i\}$ and $D_i = D'_i / B_i$.
Then
$$
n^*_i \ \geq\ \: n^*_{i-1} \cdot \frac{k - i + 1 - s}{2 (k-i+1)}
$$
and by our assumption that \lip{(iii)} holds for $n_{i-1}$ it follows that
\begin{align*}
\hspace{.8cm}n^*_i\ \geq\ &\: n^*_0 \cdot \frac{(k-s)(k-s-1)\cdots (k-s-i+2)}{2^{i-1}k(k-1)\cdots (k-i+2)}\cdot\left(\frac{k-s-i+1}{2(k-i+1)}\right)\\
\ =\ &\: n^*_0 \cdot\frac{(k-s)(k-s-1)\cdots(k-s-i+1)}{2^i \cdot k (k-1)\cdots(k-i+1)},
\end{align*}

as desired and thus \lip{(i)}, \lip{(ii)}, and \lip{(iii)} hold for $(D_i, \mathcal{B}_i, \mathcal{P}_i)$.
The algorithm ends after iteration $k-(s+1)$.
Following this procedure, we construct the collection $\mathcal{P}_{k-(s+1)} = \{P_1, P_2, \ldots, P_{k-(s+1)}\}$ satisfying \lip{(ii)} and the graph $D_{k-(s+1)}$ with $n_{k-(s+i)}$ satisfying \lip{(iii)}.
Noticing that $|R - R_{k-(s+1)}| = s+1$ (that is, only $s+1$ pairs in $R$ are not accounted for in $\mathcal{P}_{k-(s+1)}$), it remains to show that our choice for $n^*$ is large enough so that we are able to apply Lemma~\ref{lemma:algorithm_base_instance} on the instance $(D_{k-(s+1)}, R - R_{k-(s+1)}, s+1, d, s)$ of \tbdp.
That is, we want that $n^*_{k-(s+1)} \geq 2d(s+1)$. By \lip{(iii)} it is enough to show that
$$n^*_{k-(s+1)}\ \geq\ n^*_0 \cdot \frac{(k-s)(k-s-1) \cdots 3 \cdot 2}{2^{k-(s+1)} \cdot k(k-1) \cdots (s+3)(s+2)}\ \geq\ 2d \cdot (s+1),$$
and rewriting both sides of the fraction as $k!$ and $k!/(s+1)!$, respectively, we get
$$
n_0 \cdot \frac{(k-s)!}{2^{k-(s+1)}}\ \geq \ 2d \cdot (s+1) \cdot \frac{k!}{(s+1)!}\ =\ \frac{2d\cdot k!}{s!},
$$
which holds for
$$ n_0 \ \geq \ \left(\frac{2^{k-(s+1)}\cdot 2d \cdot (s+1)}{(s+1)!}\right)\cdot\left(\frac{k!}{(k-s)!}\right)
\ = \ d \cdot 2^{k-s}\cdot \binom{k}{s},
$$
as desired.

Applying Lemma~\ref{lemma:algorithm_base_instance} with input $(D_{k-(s+1)},R \setminus R_{k-(s+1)}, s+1, d,s)$ yields a collection $\hat{\mathcal{P}}$ satisfying $R - R_{k-(s+1)}$ and a set $X \subseteq V(D)$ of size $d$ (since $V(D_{k-(s+1)}) \subseteq V(D)$) such that $X$ is disjoint from all paths in $\mathcal{P}_{k-(s+1)}$, since all vertices in $V^*(P)$ were bypassed in $D_{k-(s+1)}$ for every $P \in \mathcal{P}_{k-(s+1)}$, and all vertices in $X$ occur in at most $s$ paths of $\hat{\mathcal{P}}$.
We can construct a collection of paths satisfying $R$ from $\hat{\mathcal{P}} \cup \mathcal{P}_{k-(s+1)}$ by reversing all the bypasses done in $D$ and connecting appropriately the paths in the collections (see Remark~\ref{remark:bypass_harder}).
We output this newly generated collection as a solution for $(D, R, k, d, s)$.

For the running time, let $m = |(E(D)|$. We need time $\Ocal(k \log k)$ to order the elements of $\mathcal{B}_0$, time $\Ocal(k \cdot n(n+m))$ to find the sets $B_i$, for $i \in [k]$, and  $\Ocal(n+m)$ to find each of the paths $\{P_1, \ldots, P_k\}$. Hence the algorithm runs in time $\Ocal(k \cdot n^2(n+m))$.
\end{proof}

We acknowledge that it is possible to prove Theorem~\ref{theorem:solving_large_instances_generalization} without using Lemma~\ref{lemma:algorithm_base_instance} by stopping the iteration at the digraph $D_{k-s}$ instead of $D_{k-s-1}$.
However we believe it is easier to present the proof of Theorem~\ref{theorem:solving_large_instances_generalization} by having separate proofs for the iteration procedure (Lemma~\ref{lemma:iteration_lemma}) that aims to generate an instance of DEDP for which we can apply our base case (Lemma~\ref{lemma:algorithm_base_instance}).

Since any instance can be made clean in polynomial time, the kernelization algorithm for $(k,d,s)$-\tbdp follows easily. Given an instance $(D, R, k, d, s)$, we bypass all congested vertices of $D$ to generate $D'$. If $|V^*(D')|$ is large enough to apply Theorem~\ref{theorem:solving_large_instances_generalization}, the instance is positive and we can find a solution in polynomial time.
Otherwise, we generated an equivalent instance $(D', R, k, d, s)$ with $|V(D')|$ bounded from above by a function depending on $k, d$, and $s$ only.
As we restrict $|S(R) \cup T(R)| \leq 2k$, if $D$ is clean and $V(D) \geq d \cdot 2^{k-s} \cdot \binom{k}{s} + 2k$ we get the desired bound for $|V^*(D)|$. Thus, the following  is  a direct corollary of Theorem~\ref{theorem:solving_large_instances_generalization}.

\begin{theorem}\label{corollary:kernel_for_k_d_s_problem}
There is a kernelization algorithm running in time $\Ocal(k \cdot n^2(n+m))$ that, given an instance $(D, R, k, d, s)$ of {\sc DEDP}, outputs either a solution for the instance or an equivalent instance $(D', R, k, d, s)$ with
%\vspace{-.4cm}
$|V(D')|\ \leq\ d \cdot 2^{k-s} \cdot \binom{k}{s} + 2k.$
%\vspace{-.2cm}
\end{theorem}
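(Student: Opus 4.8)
The plan is to obtain the kernel by combining a cleaning reduction with Theorem~\ref{theorem:solving_large_instances_generalization} used as a size threshold. First I would compute the set $B$ of all $I$-blocking vertices of $D$. By Definition~\ref{defintion:blocking_vertex}, a vertex $v \in V^*(D)$ is blocking exactly when $|I_{\{v\}}| \geq s+1$, so for each $v \in V^*(D)$ I delete $v$, run one reachability test per request to count how many pairs become disconnected, and collect into $B$ those $v$ with at least $s+1$ disconnected pairs. I then bypass all of $B$ to form $D' = D/B$. By Lemma~\ref{lemma:bypass_equivalence} the instance $(D', I, k, d, s)$ is equivalent to $(D, I, k, d, s)$, and since bypassing a vertex cannot create a new blocking vertex (as noted before that lemma), $D'$ is clean for $I$.

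Second, I would branch on the size of $V^*(D')$. If $|V^*(D')| \geq d \cdot 2^{k-s} \cdot \binom{k}{s}$, then $(D', I, k, d, s)$ is a clean instance satisfying the hypothesis of Theorem~\ref{theorem:solving_large_instances_generalization}; I apply that theorem to obtain, in polynomial time, a solution for $(D', I, k, d, s)$, which by equivalence is also a solution for the original instance, and the algorithm outputs this solution. Otherwise $|V^*(D')| < d \cdot 2^{k-s} \cdot \binom{k}{s}$, and since $V(D') = V^*(D') \cup (S(I) \cup T(I))$ with $|S(I) \cup T(I)| \leq 2k$, we obtain
$$|V(D')| \leq d \cdot 2^{k-s} \cdot \binom{k}{s} + 2k,$$
so the algorithm outputs the equivalent clean instance $(D', I, k, d, s)$ as a kernel of the claimed size.

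Third, I would bound the running time. Computing $B$ costs $\Ocal(k \cdot n(n+m))$, and producing $D'$ from $D$ and $B$ is polynomial since bypassing a vertex only rewires its in- and out-neighborhoods and, by~\cite{Kratsch.Pilipczuk.Wahlstrom.15}, is order-independent. The threshold test is a single comparison, so the dominating cost is the possible call to Theorem~\ref{theorem:solving_large_instances_generalization}, which runs in time $\Ocal(k \cdot n^2(n+m))$; this yields the overall bound.

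I do not expect a genuine obstacle here, as the statement is essentially a repackaging of Theorem~\ref{theorem:solving_large_instances_generalization}. The only points requiring care are that the cleaning step both preserves equivalence and does not reintroduce blocking vertices --- both already guaranteed by Lemma~\ref{lemma:bypass_equivalence} and the remark that bypassing never creates a new blocking vertex --- and that the additive $2k$ term is correctly attributed to the terminals in $S(I) \cup T(I)$, which are excluded from $V^*(D')$ and are never bypassed.
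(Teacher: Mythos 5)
Your proposal is correct and follows essentially the same route as the paper: bypass all blocking vertices to obtain an equivalent clean instance (via Lemma~\ref{lemma:bypass_equivalence}), then either invoke Theorem~\ref{theorem:solving_large_instances_generalization} when $|V^*(D')|$ meets the threshold or output the small instance, with the additive $2k$ accounting for the terminals. The running-time accounting also matches the paper's.
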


\section{Concluding remarks}\label{subsection:poly_kernel_open}

We introduced the \tbd problem and provided a number of hardness and algorithmic results, summarized in Table~\ref{table:summary_of_results}. Several questions remain open.

We showed that \tbdp is \textsf{NP}-complete for every fixed $k\geq 3$ and $s = 1$ and we leave open the question of whether \textsc{DEDP} is \textsf{NP}-complete for every fixed $k \geq 3$ and fixed $s \geq 2$ such that $s < k$.
The reduction used in the proof of Theorem~\ref{theorem:npc_for_k} also shows that \textsc{DEDP} is as hard as \textsc{Directed Disjoint Paths with Congestion}. 
Although it is not known whether the latter is \textsf{NP}-complete for some fixed value of $k$ when the congestion is at least two, a positive answer to this question would also imply the \textsf{NP}-completeness of \textsc{DEDP} for some fixed $k$ and $s \geq 1$ with $s < k$.
Giannopoulou et al.~\cite[Conjecture 1.3]{giannopoulou2020canonical} conjectured that \textsc{DDPC} is solvable in polynomial time for every fixed $k \geq 2$ and fixed congestion $s = 2$.

We provided an algorithm running in time $\Ocal(n^{d+2} \cdot k^{d\cdot s})$ to solve the problem. This algorithm tests all partitions of a given $X \subseteq V(D)$ in search for one that respects some properties.
Since there are at most $\binom{n}{d}$ subsets of $V(D)$ of size $d$, this yields an \textsf{XP} algorithm.
The second term on the time complexity comes from the number of partitions of $X$ we need to test. The problem may become easier if $X$ is already given or, similarly,  if $d$ is a constant.
In other words, is the $(s)$-\tbdp problem \textsf{FPT} for fixed $d$?

Our main result is a kernel with at most $d \cdot 2^{k-s} \cdot \binom{k}{s} + 2k$ vertices. The natural question is whether the problem admits a polynomial kernel with parameters $k$, $d$, and $s$, or even for fixed $s$.
Notice that if there is a constant $\ell$ such that $k-s = \ell$, then the size of the kernel is $d \cdot 2^{\ell} \cdot k^{\ell}$, which is polynomial on $d$ and $k$.
The case $s=0$ is also particularly interesting, as \tbdp with $s=0$ is equivalent to the \textsc{Steiner Network} problem. In this case, we get a kernel of size at most $d \cdot 2^{k} + 2k$.

While we do not know whether $(k,d,s)$-\tbdp admits a polynomial kernel, at least we are able to prove that a negative answer for $s=0$ is enough to show that $(k,d,s)$-\tbdp~is unlikely to admit a polynomial kernel for any value of $s \geq 1$ when $k$ is ``far'' from $s$, via the following polynomial time and parameter reduction.

\begin{remark}\label{remark-kernel}
For any instance $(D, R, k, d, 0)$ of {\sc DEDP} and integer $s > 0$, one can construct in polynomial time an equivalent instance $(D, R', k', d, s)$ of {\sc DEDP} with $k' = k \cdot (d \cdot s + 1)$.
\end{remark}
\begin{proof}
For a request set $R$ in $D$, let $R'$ be the request set in $D$ formed by $d \cdot s +1$ copies of each pair in $R$ and let $k' = k \cdot (d \cdot s +1)$.
We claim that an instance $(D, R, k, d, 0)$ of \tbdp~is positive if and only if the associated instance $(D, R', k', d, s)$, also of \tbdp, is positive.

From any solution $\mathcal{P}$ for the first instance, we can construct a solution for the second by taking $d\cdot s + 1$ copies of each path in $\mathcal{P}$ and thus the necessity holds.
For the sufficiency, let $X$ be an $s$-viable set for $(D, R', k', d, s)$ with certifying collection $\mathcal{P'}$.
By the construction of $R'$ and since at most $d \cdot s$ paths in $\mathcal{P'}$ can intersect $X$, we conclude that there is path $P \in \mathcal{P'}$ from $s$ to $t$ in $D - X$ for each pair $(s, t) \in R$.
Choosing all such paths we construct a collection $\mathcal{P}$ satisfying $R$ in $D-X$, and the remark follows.
\end{proof}

In the undirected case, the \textsc{Steiner Tree} problem
%is \textsf{FPT} when parameterized by the size of the request set, but
is unlikely to admit a polynomial kernel  parameterized by $k$ and $c$, with $c = n -d$ (in other words, the size of the solution); a simple proof for this result can be found in~\cite[Chapter 15]{CyganFKLMPPS15}.  Even if we consider a stronger parameter (that is, $d$ instead of $c$), dealing with directed graphs may turn the problem much harder. We also remark that the problem admits a polynomial kernel in the undirected case if the input graph is planar~\cite{Pilipczuk:2018:NSS:3266298.3239560}. It may also be the case for directed graphs.

\medskip
\noindent \textbf{Acknowledgement}. We would like to thank the anonymous reviewers for helpful and thorough comments that improved the presentation of the manuscript, in particular for pointing out a wrong argument in the proof of Theorem~\ref{theorem:npc_for_k} that made us claim, in the conference version of this article, that \textsc{DEDP} is \textsf{NP}-complete for every fixed $k \geq 3$ and $s \geq 1$ such that $s < k$. As mentioned above, this is still open.

%\newpage

%%
%% Bibliography
%%

%% Please use bibtex,

%\bibliography{lipics-v2019-sample-article}
%\bibliographystyle{abbrv}
\bibliography{references}

\end{document}